\tikzset{>=stealth, shorten >=1pt}
\tikzset{every edge/.style = {thick, ->, draw}}
\tikzset{every loop/.style = {thick, ->, draw}}
\tikzset{every state/.style={
            very thick,
            fill=black!10,
            rounded corners=1mm,
            minimum size=6mm,inner sep=2pt,
}}
\tikzset{anode/.style={
        font=\small,
        rectangle
}}
\tikzset{ostate/.style={state,rectangle}}
\tikzset{estate/.style={state,
            minimum size=8mm,
diamond}}
\tikzset{every picture/.style={
            >={Triangle},
            thick,
            ->,
            initial text={},
            initial distance=2.5em,
            thick
}}
\tikzset{every edge/.append style={
  shorten <=2pt,
  shorten >=2pt,
}}
\tikzset{every loop/.append style={
  shorten <=2pt,
  shorten >=2pt,
}}
\tikzset{every label/.style={
        text=black!50,
        font=\small,
}}
\tikzset{
         strike through/.append style={
    decoration={markings, mark=at position 0.5 with {
    \draw[-] ++ (-5pt,-5pt) -- (5pt,5pt);}
  },postaction={decorate}}
}
\newcommand{\N}{\mathbb{N}}
\renewcommand{\epsilon}{\varepsilon}
\newcommand{\x}{\times}
\newcommand{\length}[1]{\lvert #1 \rvert}
\newcommand{\card}[1]{\lvert #1 \rvert}
\newcommand{\eqby}[2][=]{\stackrel{\text{{\tiny{#2}}}}{#1}}
\newcommand{\eqdef}{\eqby{def}}
\newcommand{\statespart}{V\mkern1.5mu{=}\mkern1.5mu V_1\mkern1.0mu{\uplus}\mkern1.0mu V_2}
\newcommand{\strategy}{\sigma}
\newcommand{\step}[2][]{\Step{#2}{}{#1}}
\newcommand{\Step}[3]{\ensuremath{\,{\stackrel{#1}{\longrightarrow}}{}^{\scriptstyle{#2}}_{\scriptstyle{#3}}}\,}
\newcommand{\maxtime}{K}
\newcommand{\states}{V}
\newcommand{\nodes}{V}
\newcommand{\edges}{E}
\newcommand{\colours}{C}
\newcommand{\col}{\mathop{col}}
\newcommand{\true}{\mathop{true}}
\newcommand{\hor}{h}
\newcommand{\period}{K}
\newcommand{\init}{s_0}
\newcommand{\ReachCols}[2]{R_{#2}^{#1}}
\newcommand{\PONE}{Player~1\xspace}
\newcommand{\PTWO}{Player~2\xspace}
\newcommand{\Pre}[1][1]{\mathop{Pre}\nolimits_{#1}}
\newcommand{\Post}[1]{\mathop{Post}(#1)}
\begin{document}
\title{Parity Games on Temporal Graphs}
%
%
\author{
    Pete Austin \orcidID{0000-0003-0238-8662}
    \and
    Sougata Bose   \orcidID{0000-0003-3662-3915}
    \and
    Patrick Totzke \orcidID{0000-0001-5274-8190}  
}
\authorrunning{P.~Austin, S.~Bose, and P.~Totzke}
%
\institute{University of Liverpool, UK}
%
\maketitle              
\begin{abstract}
    Temporal graphs are a popular modelling mechanism for dynamic complex systems that extend ordinary graphs with discrete time. Simply put, time progresses one unit per step and the availability of edges can change with time.

We consider the complexity of solving $\omega$-regular games played on temporal graphs
where the edge availability is ultimately periodic and fixed a priori.

We show that solving parity games on temporal graphs is decidable in \PSPACE, only assuming the edge predicate itself is in \PSPACE.
A matching lower bound already holds for what we call \emph{punctual} reachability games on static graphs, where one player wants to reach the target at a given, binary encoded, point in time.
We further study syntactic restrictions that imply more efficient procedures.
In particular, if the edge predicate is in $\P$ and is monotonically increasing for one player and decreasing for the other, then
the complexity of solving games is only polynomially increased compared to static graphs.

\keywords{Temporal graphs \and Reachability Games \and Complexity \and Timed automata}
\end{abstract}

\section{Introduction}
\label{sec:intro}
Temporal graphs
are graphs where the edge relation changes over time. They are often presented as a sequence $G_{0},G_{1},\ldots$ of graphs over the same set of vertices.
We find it convenient to define them as pairs $G=(\nodes,\edges)$ 
consisting of a set $\nodes$ of vertices
and associated edge availability predicate 
$\edges:\nodes^{2}\to 2^\N$
that determines at which integral times a directed edge can be traversed.
This model has been used to analyse dynamic networks and distributed systems in dynamic topologies, such as gossiping and information dissemination \cite{ravi1994,KLO2010}.
There is also a large body of work that considers temporal generalisations
of various graph-theoretic notions and properties  \cite{MCS2014,FMS2009,DFS2023}.
Related algorithmic questions include graph colouring \cite{MMZ21},
exploration \cite{EHK21},
travelling salesman \cite{MS14},
maximum matching \cite{MMN23},
and vertex-cover \cite{AMGZ20}.
The edge relation is often
deliberately left unspecified and sometimes only assumed to
satisfy some weak assumptions about connectedness, frequency, or fairness
to study the worst or average cases in uncontrollable environments.
Depending on the application, one distinguishes between ``online'' questions, where the edge availability is revealed stepwise, as opposed to the ``offline'' variant where all is given in advance.
We refer to \cite{HJ19,M2015} for overviews of temporal graph theory and its applications.

Two player zero-sum verification games on directed graphs play a central role in formal verification, specifically the reactive synthesis approach \cite{PR1989}.
Here, a controllable system and an antagonistic environment are modeled as a game in which two opposing players jointly move a token through a graph.
States are either owned by \PONE (the system) or \PTWO (the environment), and the owner of the current state picks a valid successor.
Such a play is won by \PONE if, and only if, the constructed path
satisfies a predetermined \emph{winning condition} 
that models the desired correctness specification.
The winning condition is often given either in a temporal logic such as Linear Temporal Logic (LTL) \cite{P1977},
or directly as $\omega$-automaton whose language is the set of infinite paths considered winning for \PONE.
The core algorithmic problem is solving games: to determine
which player has a strategy to force a win, and if so, how.

Determining the complexity of solving games on static graphs has a long history and continues to be an active area of research.
We refer to \cite{GAMES2002,GAMES2023} for introductions on the topic
and recall here only that
solving reachability games, where \PONE aims to eventually reach a designated target state, is complete for polynomial time.
The precise complexity of solving parity games is a long-standing open question.
It is known to be in $\UP\cap\co\UP$ \cite{J1998}, and so in particular in $\NP$ and $\co\NP$,
and recent advances have led to quasi-polynomial time algorithms
\cite{CJKLS2017,JL2017,LB2020,CF2019,LPSD2022}.

\paragraph{Related Work.}

Periodic temporal graphs were first studied by Floccchini, Mans, and Santoro in \cite{FMS2009}, where they show polynomial bounds on the length of explorations (paths covering all vertices).
Recently, De Carufel, Flocchini, Santoro, and Simard \cite{DFS2023} study Cops \& Robber games on periodic temporal graphs.
They provide an algorithm for solving one-cop games that is only quadratic in the number of vertices and linear in the period. 
%

Games on temporal graphs with maximal age, or period of some absolute value $K$ given in binary are games on exponentially succinctly presented arenas.
Unfolding them up to time $K$ yields an ordinary game on the exponential sized graph which allows to transfer upper bounds, that are not necessarily optimal.
In a similar vein, Avni, Ghorpade, and Guha \cite{AGG2023} have recently introduced types of games on exponentially succinct arenas called pawn games. Similar to our results, their findings provide improved \PSPACE\ upper bounds for reachability games.

Parity games on temporal graphs are closely related to timed-parity games,
which are played on the configuration graphs of timed automata \cite{AD1994}.
However, the time in temporal graphs is discrete as opposed to the continuous time semantics in timed automata.
Solving timed parity games is complete for \EXP \cite{MPS1995,CHP2011}
and the lower bound already holds for reachability games on timed automata with only two clocks \cite{JT2007}.
Unfortunately, a direct translation of (games on) temporal graphs to equivalent timed automata games requires at least two clocks: one to hold the global time used to
check the edge predicate and one to ensure that time progresses one unit per step.

\paragraph{Contributions.}
We study the complexity of solving parity games on temporal graphs.
As a central variant of independent interest are what we call \emph{punctual} reachability games, that are played on a static graph and player wants to reach a target vertex at a given binary encoded time.
We show that solving such games is already hard for \PSPACE, which provides a lower bound for all temporal graph games we consider.

As our second, and main result, we show how to solve parity games on (ultimately) periodic temporal graphs.
The difficulty to overcome here is that the period may be exponential in the number of vertices
and thus a na\"ively solving the game on the unfolding only yields algorithms in exponential space.
Our approach relies on the existence of polynomially sized summaries that can be verified in \PSPACE\ using punctual reachability games.

We then provide a sufficient syntactic restriction that avoids an increased complexity for game solving.
In particular, if the edge predicate is in polynomial time and is monotonically increasing for one player and decreasing for the other, then
the cost of solving reachability or parity games on temporal graphs
increases only polynomially in the number of vertices compared to the cost of solving these games on static graphs.

\smallskip
None of our upper bounds rely on any particular representation of the edge predicate.
Instead, we only require that the representation ensures that checking membership (if an edge is traversable at a given time) has suitably low complexity.
That is, our approach to solve parity games only requires that the edge predicate is in \PSPACE, and polynomial-time verifiable edge predicates suffice
to derive \P-time upper bounds for monotone reachability games.
These conditions are met for example if the edge predicate is defined as semilinear set given as an explicit union of linear sets (\NP\ in general and in \P\ for singleton sets of periods), or by restricted Presburger formulae: the quantifier-free fragment is in \P, the existential fragment is in \NP\ but remains in \P\ if the number of variables is bounded \cite{S1984}. See for instance \cite{H2018} and contained references.

\smallskip
The rest of the paper is structured as follows.
We recall the necessary notations in 
\cref{sec:preliminaries} and then discuss reachability games in \cref{sec:reachability}.
\Cref{sec:parity} presents the main construction for solving parity games and finally,
in \cref{sec:monotone}, we discuss improved upper bounds for monotone temporal graphs.

\section{Preliminaries}
\label{sec:preliminaries}
\newcommand{\timeprop}{\lambda}

\newcommand\prob[3] {
	\noindent
	\medskip
	\smallskip
	\noindent\\{\bfseries Problem:} #1
	\smallskip
	\hrule
	\smallskip
	\noindent{\bfseries Input:} #2\\
	{\bfseries Question:} #3
	\smallskip
	\hrule
}

\begin{definition}[Temporal Graphs]
	A temporal graph $G=(\nodes,\edges)$ is a directed graph where $\nodes$ are vertices and $\edges:\nodes^2\to 2^\N$ is the edge availability relation
        that maps each pair of vertices to the set of times at which the respective directed edge can be traversed.
        If $i\in \edges(s,t)$ we call $t$ an \emph{$i$-successor} of $s$ and write
$s\step{i}t$.

        The \emph{horizon} of a temporal graph
        is $\hor(G)=\sup_{s,t\in\nodes}(\edges(s,t))$, the largest finite time at which any edge is available, or $\infty$ if no such finite time exists.
        A temporal graph is \emph{finite} if $\hor(G)\in\N$
        i.e., every edge eventually disappears forever.
        A temporal graph
        is \emph{periodic} with period $\period\in\N$ if for all nodes $s,t\in\nodes$ it holds that $\edges(s,t) =\edges(s,t) +\period\cdot\N$. We call $G$ \emph{static} if it has period $1$.
\end{definition}

Naturally, one can unfold a temporal graph into its \emph{expansion}
up to some time $T\in\N\cup\{\infty\}$,
which is the graph with nodes $\nodes\x\{0,1,\ldots,T\}$
and directed edges $(s,i)\to (t,i+1)$ iff $i\in\edges(s,t)$.

In order for algorithmic questions to be interesting, we assume that temporal graphs are
given in a format that is more succinct than the expansion
up to their horizon or period.
We only require that the representation ensures that checking if an edge is traversable at a given time can be done reasonably efficiently.

We will henceforth use formulae in the existential fragment of Presburger arithmetic,
the first-order theory over natural numbers with equality and addition.
That is, the $\exists$PA formula $\Phi_{s,t}(x)$ with one free variable $x$
represents the set of times at which an edge from $s$ to $t$ is available
as 
\mbox{$\edges(s,t) = \{n \mid \Phi_{s,t}(n) \equiv \true\}$}.
We use common syntactic sugar including inequality and multiplication with (binary encoded) constants.
For instance, $\Phi_{s,t}(x) \eqdef 5 \le x \land x \le 10$
means the edge is available at times $\{5,6,7,8,9,10\}$;
and $\Phi_{s,t}(x) \eqdef \exists y. (x=y\cdot 7) \land \lnot (x\le 100)$
means multiples of $7$ greater than $100$.

\begin{definition}[Parity Games]
    A \emph{parity game} is a zero-sum game played by two opposing players on a directed graph.
Formally, the game is given by a game graph $G=(\states,\edges)$,
a partitioning $\states=\states_1\uplus\states_2$ of vertices into those owned by \PONE and \PTWO respectively, and a colouring 
$\col:\states\to\colours$
of vertices into a finite set $\colours\subsetneq\N$ of colours.

The game starts with a token on an initial vertex $\init\in\states$
and proceeds in turns where in round $i$, the owner of the vertex occupied by the token moves it to some successor. This way both players jointly agree on an infinite path
$\rho=s_0s_1\ldots$
called a \emph{play}.
A play is winning for \PONE if 
$\max\{c\mid \forall i \exists j. \col(s_j)=c\}$, the maximum colour seen infinitely often,
is even.

A \emph{strategy} for Player~$i$ is a recipe for how to move.
Formally, it is a function $\sigma_i:\states^*\states_i\to \states$
from finite paths ending in a vertex $s$ in $V_i$ 
to some successor.
%
We call $\sigma$ \emph{positional} 
if 
$\sigma(\pi s) = \sigma(\pi' s)$
for any two prefixes $\pi,\pi'\in\states^*$.
A strategy is \emph{winning from vertex $s$} if Player~$i$ wins every play
that starts in vertex $s$ and during which all decisions are made according to $\sigma$. 
\end{definition}

We call a vertex $s$ winning for Player~$i$ if there exists a winning strategy from $s$, and call the subset of all such vertices the \emph{winning region} for that player.
Parity games enjoy the following property (See \cite[Theorem~15]{GAMES2023} for details).
\begin{proposition}
\label{thm:uprop}
    Parity games are uniformly positionally determined:
    For every game $(\statespart,\edges,\col)$
    there is a pair $\sigma_1,\sigma_2$ of positional strategies
    so that $\sigma_i$ is winning for Player~$i$
    from every vertex in the winning region of Player~$i$.
%
\end{proposition}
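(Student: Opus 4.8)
The plan is to prove positional determinacy by induction on the number of distinct colours appearing in the game, following the classical recursive decomposition of Zielonka and McNaughton. The main tool is the attractor construction: for a target set $U\subseteq\states$ and a player $i$, define $\text{Attr}_i(U)$ as the least fixpoint of the monotone operator $X\mapsto U\cup\Pre[i](X)$, where $\Pre[i](X)$ collects the vertices owned by Player~$i$ that have some edge into $X$ together with the opponent-owned vertices all of whose edges lead into $X$. A standard preliminary step is to check that Player~$i$ has a \emph{positional} strategy—the attractor strategy—that forces the token from any vertex of $\text{Attr}_i(U)$ into $U$ in finitely many steps, by always moving towards a fixpoint layer of strictly smaller rank. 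Since parity conditions are Borel, determinacy (some player wins from every vertex) is available for free; the content of the statement is that memoryless, uniform strategies suffice, which is what the inductive construction delivers.

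For the induction, let $d=\max\col(\states)$ and assume without loss of generality that $d$ is even, so that seeing $d$ infinitely often favours \PONE (the odd case is symmetric, swapping the two players). Set $A=\text{Attr}_1(\col^{-1}(d))$ and consider the subgame $G'=G\setminus A$ obtained by deleting $A$; this subgame contains no colour-$d$ vertex and hence uses strictly fewer colours, so the induction hypothesis yields positional strategies $\sigma_1',\sigma_2'$ with winning regions $W_1',W_2'$ partitioning $G'$. If $W_2'=\emptyset$, then \PONE wins everywhere in $G$ by combining the attractor strategy on $A$ with $\sigma_1'$ on the remainder. Otherwise let $B=\text{Attr}_2(W_2')$ in the full game $G$; then \PTWO wins throughout $B$ using its attractor strategy followed by $\sigma_2'$, and the claim follows by applying a secondary induction on $|\states|$ to $G\setminus B$, which has strictly fewer vertices.

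I expect the crux to lie in verifying that the strategies assembled from the attractor strategies and the inductively obtained positional strategies are genuinely winning on their claimed regions. Because each combined strategy depends only on the current vertex—the attractor part on the fixpoint rank, the recursive part on $\sigma_i'$—positionality is preserved automatically when we glue the pieces on the disjoint sets $A$, $B$, and the complements. The delicate point is the correctness argument for \PONE: fix any play consistent with the combined strategy and perform a case analysis on whether it visits $A$, equivalently the colour-$d$ vertices, infinitely often. If it does, then $d$ is the maximum colour seen infinitely often and \PONE wins by parity of $d$; if it does not, the play eventually remains inside $G'$ and is won via $\sigma_1'$ by the induction hypothesis. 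Symmetric reasoning handles \PTWO. Finally, uniformity—that a single positional strategy suffices simultaneously from every vertex of a player's winning region—is exactly what this disjoint gluing provides, since the construction never re-chooses the move at a vertex depending on the starting point.
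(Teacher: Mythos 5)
The paper does not actually prove this proposition: it is imported from the literature, with the citation to \cite{GAMES2023} standing in for a proof. So there is no in-paper argument to compare against; what you have written is the classical Zielonka--McNaughton recursion, and as a proof of positional determinacy for \emph{finite} arenas it is essentially correct and complete at the level of a sketch. Two points that you leave implicit would each need a sentence in a full write-up. First, the gluing steps silently rely on two trap facts: the complement of a Player-$1$ attractor is a trap for \PONE (this is what guarantees that $G'=G\setminus A$ is a deadlock-free subgame and that \PONE cannot escape from $G'$ back into $A$), and the subgame winning region $W_2'$ is a trap for \PONE inside $G'$ (otherwise \PONE could enlarge $W_1'$ by a single move). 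Both are needed to argue that a play which eventually settles in $G'$, respectively in $W_2'$, is genuinely a play of the subgame consistent with the inductively obtained strategy. Second, the correctness of \PTWO's strategy on $B=\mathit{Attr}_2(W_2')$ is not the mirror image of your case analysis for \PONE: it is the simpler observation that the attractor strategy drives the token into $W_2'$, which by the trap facts \PONE can never leave, after which $\sigma_2'$ wins by prefix-independence of the parity condition; ``symmetric reasoning'' undersells what is being used there.

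One genuine caveat: your secondary induction is on $\card{\states}$, so the argument as written establishes the proposition only for finite vertex sets. The paper, however, also invokes \cref{thm:uprop} on infinite arenas (the expansion with vertices $\states\times\N$ in \cref{sec:monotone}). Positional determinacy does hold there---finitely many colours is what matters, not finitely many vertices---but the proof then needs transfinite attractor ranks and a union-of-winning-subregions argument in place of the induction on $\card{\states}$ (or a separate, elementary argument for the reachability objectives actually used there). This is worth a remark if the proposition is meant to cover every use made of it in the paper.
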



\medskip
A \emph{temporal parity game} is a parity game played on the infinite expansion of a temporal graph $G=(\states,\edges)$,
where the ownership and colouring of vertices are given
with respect to the underlying directed graph
$\statespart$ 
and $\col:\states\to\colours$.
The ownership and colouring are lifted to the expansion so that
vertices in $\states_i\x\N$ are owned by Player~$i$
and vertex $(s,n)$ has colour $\col(s)$.

\begin{example}\label{ex:prg}
Consider the temporal parity game shown in \cref{fig:tpg}.
We will draw \PONE states as diamond and those controlled by \PTWO as squares
and sometimes write modulo expressions to define the edge availability.
For example, 
the constraint on the edge from $u$ to $v$ can be written as the $\exists$PA-formula as $\exists y. (x= 3y) \lor (x=3y+1)$
and so this edge is available at times $0,1,3,4,6,\dots$.
The temporal graph underlying this game has period $15$.

\PONE has a winning strategy starting from $(s,i)$ in 
the expansion by staying in state $s$ until time  $i'\ge i$ with $i'\equiv 0 \mod 5$  and then following the edge to $(t,i'+1)$.
If \PTWO ever chooses to move to $r$, he is trapped in an even-coloured cycle;
if he stays in $t$ forever, then the resulting game sees only colour $2$ and is losing for him.
Otherwise, if the game continues at $(s,i'+2)$, \PONE repeats as above (and wins plays that see both states $s$ and $t$.
The example shows that \PONE s strategies depend on the time and are not positional in the vertices alone, even if the winning set has period $1$.
Indeed, the only possible vertex-positional strategy (cycle in $s$) is losing.

The vertices $\{s,t\}$ shaded in blue represent the vertex from which \PONE can win starting at any time, following the strategy described above.
From the vertices shaded in red, \PTWO can win starting at certain times. For example,  \PTWO has a winning strategy from $(u,i)$ if, and only if, $i\equiv 0 \mod 3$ or $i\equiv 1 \mod 3$ by moving to $(v,i+1)$. Notice that this edge is not available, and thus \PTWO is forced to move to $t$ at times $x\equiv2 \mod 3$. In particular therefore, \PONE wins from $(v,0)$.
The winning region for \PONE is $\{(s,k), (t,k), (r, k), (u,3k+2), (v,3k), (w,3k+1)~\mid ~ k\in \N\}$.
\end{example}

\begin{figure}[t]
	\begin{center}
                    \begin{tikzpicture}[
RED/.style = {color=red, fill=red!20},
BLUE/.style = {color=blue, fill=blue!20}
    ]
        \node[estate, BLUE, label={1}] (s) at (0, 0){$s$};
        \node[estate,RED, label={3}] (v) at (8, 0){$v$};
        \node[ostate,RED, label={2} ] (u) at (4, 0){$u$};
        \node[ostate,RED, label={below:2}](w) at (8, -3){$w$};
        \node[ostate,BLUE, label={below:2}] (t) at (0, -3){$t$};
        \node[ostate,BLUE, label={below:4}] (r) at (4, -3){$r$};

        \draw  (s) edge [loop left] node [above]{} (s);
        \draw  (t) edge [loop left] node [above]{} (t);
        \draw  (s) edge (u);
        \draw  (s) edge [bend right] node [sloped,auto]{$x\equiv0 \mod 5$} (t);
        \draw  (t) edge [bend right] (s);
        \draw  (u) edge node [above, sloped]{$\lnot (x\equiv 0\mod3)$} (t);
        \draw  (u) edge node [above]{$(x\equiv0 \mod 3)\lor$}
                        node [below]{$(x\equiv 1 \mod 3)$} (v);
        \draw  (v) edge (w);
        \draw  (w) edge (u);
        \draw (t) edge (r);
        \draw (r) edge[loop right] (r);
\end{tikzpicture}
	\end{center}
	\caption{
An example of a temporal parity game. \PONE controls the diamond vertices $V_1= \{s,v\}$ and \PTWO controls square vertices $V_2=\{r,t,u,w\}$.
Edge labels are Presburger formulae constraints denoting when an edge is available; edges without constraints are always available.
The grey label next to each node denotes its colour.
E.g.,
$\col(s)=1 \in \colours=\{1,2,3,4\}$.
}
	\label{fig:tpg}
\end{figure}
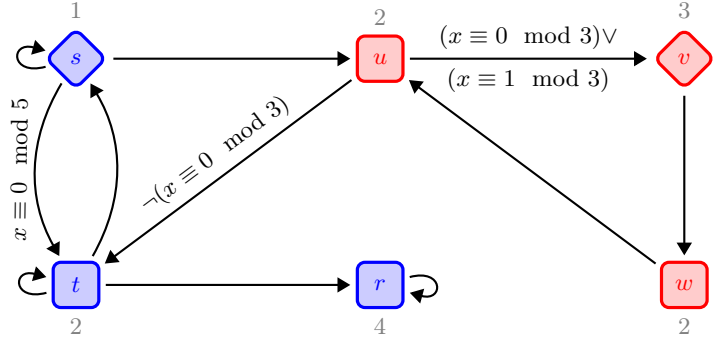

The algorithmic question we consider is determining the set of vertices from which \PONE wins starting at time $0$.

\section{Reachability Games}
\label{sec:reachability}
We discuss a variant of temporal games that turns out to be central both for upper and lower bounds for solving games on temporal graphs.

We call these \emph{punctual reachability games},
which are played on a static graph and \PONE aims to reach the target precisely at a target time.

\begin{definition}
    A \emph{punctual} reachability game 
    $G=(\states,\edges,\init,F)$
    is a game played on a static graph with vertices $\states=\states_1\uplus\states_2$, edges $\edges\subseteq\states^{2}$,
    an initial state $s_0$ and set of target vertices $F\subseteq \states$.
    An additional parameter is a target time $T\in\N$ given in binary.
    \PONE wins a play if and only if a vertex in $F$ is reached at time $T$.
\end{definition}

Punctual reachability games are really just a reformulation
of
the membership problem for alternating finite automata (AFA) 
\cite{CKS1981}
over a unary input alphabet.
\PONE wins the punctual reachability game with target $T$
if, and only if, the word $a^T$ is accepted by the AFA described by the game graph.
Checking if a given unary word $a^T$ is accepted by an AFA is complete for polynomial time if $T$ is given in unary \cite{JR1991}.
We first observe that it is \PSPACE-hard if $T$ is given in binary.
We write in the terminology of punctual reachability games
but the main argument is by reduction from the emptiness problem for unary AFA, which is \PSPACE-compete \cite{H1995,JS2007}.
We rely on the fact that the shortest word accepted by an AFA is at most exponential in the number of states.

\begin{lemma}
    \label{lem:PRG-SWP}
    Let $G=(\states,\edges,\init,F)$
    be a reachability game on a static graph.
    If there exist $T\in\N$ so that \PONE wins the punctual reachability game at target time $T$,
    then there exists some such $T\le 2^{\card{\states}}$.
\end{lemma}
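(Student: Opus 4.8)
The plan is to connect punctual reachability games directly to alternating finite automata over a unary alphabet, as the paper's surrounding text suggests, and then invoke the known bound on shortest accepted words. The key observation is that a punctual reachability game played on a static graph $G=(\states,\edges,\init,F)$ is precisely an AFA over the unary alphabet $\{a\}$: \PONE's existential choices and \PTWO's universal choices correspond to the alternating branching, the single letter $a$ drives one time step, and ``reaching $F$ at time $T$'' is exactly acceptance of the word $a^T$. Under this translation, \PONE wins the punctual reachability game for \emph{some} target time $T$ if and only if the AFA accepts \emph{some} word, i.e.\ its language is nonempty. So the statement reduces to the claim that a nonempty unary AFA with $\card{\states}$ states accepts some word of length at most $2^{\card{\states}}$.

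\medskip
\noindent\textbf{First I would} make the reachability-as-automaton correspondence precise by tracking, for each time $i$, the set of ``winning'' vertices from which \PONE can force reaching $F$ at time $T$ in the remaining $T-i$ steps; equivalently, I would work with the standard subset/powerset view of AFA acceptance. The cleanest self-contained argument avoids even naming AFA: define for each $i\ge 0$ the set $W_i\subseteq\states$ of vertices from which \PONE wins the punctual game with exactly $i$ steps remaining. Then $W_0=F$, and $W_{i+1}$ is obtained from $W_i$ by one controllable-predecessor step: a \PONE-vertex lies in $W_{i+1}$ iff it has some successor in $W_i$, and a \PTWO-vertex lies in $W_{i+1}$ iff all its successors lie in $W_i$. \PONE wins at target time $T$ iff $\init\in W_T$, so a winning $T$ exists iff $\init\in W_i$ for some $i$.

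\medskip
\noindent\textbf{The key step} is then a pigeonhole argument on the sequence $W_0,W_1,W_2,\ldots$ of subsets of $\states$. Since each $W_{i+1}$ is a deterministic function of $W_i$ alone, the sequence is eventually periodic and, crucially, determined entirely by the value $W_i$. There are only $2^{\card{\states}}$ distinct subsets of $\states$, so among $W_0,\ldots,W_{2^{\card{\states}}}$ two must coincide, say $W_j=W_k$ with $j<k\le 2^{\card{\states}}$. By the functional dependence, the whole sequence repeats with period $k-j$ from index $j$ onward. Therefore, if $\init$ ever appears in some $W_i$, it already appears in some $W_i$ with $i<k\le 2^{\card{\states}}$: any occurrence at an index $i\ge k$ can be pulled back by multiples of the period $k-j$ into the initial segment $\{0,1,\ldots,k-1\}$. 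This yields a witnessing target time $T\le 2^{\card{\states}}$, as required.

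\medskip
\noindent\textbf{The main obstacle} I anticipate is purely one of care rather than depth: ensuring the index bookkeeping gives $T\le 2^{\card{\states}}$ rather than an off-by-one bound such as $2^{\card{\states}}-1$ or $2^{\card{\states}}+1$, and making the ``pull back into the initial segment'' argument airtight when the first occurrence of $\init$ already sits inside the pre-periodic part. Both are handled by noting that the sequence $(W_i)_{i=0}^{2^{\card{\states}}}$ has $2^{\card{\states}}+1$ terms ranging over only $2^{\card{\states}}$ possible values, forcing a repeat within this range, and that periodicity lets us reduce any later witness modulo the period into $\{0,\ldots,2^{\card{\states}}\}$. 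Alternatively, one can simply cite the standard result that a nonempty unary AFA admits an accepted word of length at most $2^{\card{\states}}$ via exactly this subset-sequence eventual-periodicity argument, which is the route the surrounding text already gestures toward.
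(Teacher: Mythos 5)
Your proof is correct, but it takes a different route from the paper's. The paper argues \emph{forwards} on a fixed winning strategy $\sigma$: it considers the sets $S_k$ of vertices occupied at level $k$ of the strategy tree, finds a repetition $S_k=S_{k'}$ by pigeonhole, and then performs surgery on $\sigma$ --- splicing out the segment between times $k$ and $k'$ --- to obtain a winning strategy for the smaller target time $T-(k'-k)$, contradicting minimality of $T$. You instead argue \emph{backwards}: you iterate the controllable-predecessor operator from $F$, observe that $W_{i+1}$ is a deterministic function of $W_i$ alone, conclude that the sequence $(W_i)_i$ is eventually periodic with pre-period plus period at most $2^{\card{\states}}$, and pull any witness $\init\in W_T$ back into the initial segment modulo the period. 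Your version is arguably cleaner in that it avoids manipulating strategies altogether (the paper's splicing step needs a little care, since several length-$k'$ histories may lead to the same vertex and one must fix one to continue from), and it dovetails with the $\Pre$-iteration the paper itself uses for the matching upper bound in \cref{thm:reachability}; the paper's version has the merit of directly exhibiting how a long witness is shortened. Both hinge on the same pigeonhole over the $2^{\card{\states}}$ subsets of $\states$ and yield the same bound, so the proposal is a valid, self-contained replacement.
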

\begin{proof}
    Assume towards contradiction that $T\ge 2^{\card{\states}}$ is the smallest number such that \PONE wins the punctual reachability game and consider some winning strategy $\sigma$.
    For any time $k\le T$ we can consider the set $S_k\subseteq \states$ of vertices occupied on any branch of length $k$ on $\sigma$.
    By the pigeonhole principle, we observe $k<k'\le T$ with $S_k=S_{k'}$,
    which allows to create a strategy $\sigma'$ that follows $\sigma$ until time $k$,
    then continues (and wins) according to $\sigma$ as if it had just seen
    a length $k'$ history leading to the same vertex. This shows that there exists a winning strategy for target time $T-(k-k')$, which contradicts the assumption.  
    \qed
\end{proof}

A lower bound for solving punctual reachability games is now immediate.

\begin{lemma}
    \label{lem:reachability-LB}
    Solving punctual reachability games with target time $T$ encoded in binary
    is \PSPACE-hard.
\end{lemma}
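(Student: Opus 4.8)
The plan is to prove \PSPACE-hardness of solving punctual reachability games by a reduction from the emptiness problem for unary alternating finite automata (AFA), which is known to be \PSPACE-complete \cite{H1995,JS2007}. The excerpt has already set up the precise correspondence I want to exploit: a punctual reachability game on a static graph with target time $T$ encodes exactly the question of whether the word $a^T$ is accepted by the AFA obtained by reading \PONE-vertices as existential (OR) states and \PTWO-vertices as universal (AND) states. So the entire conceptual content of the reduction is already available, and what remains is to turn an emptiness query into a \emph{single} punctual reachability query.

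\medskip

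First I would fix an instance of unary AFA emptiness, i.e.\ an AFA $\mathcal{A}$ over the one-letter alphabet $\{a\}$, and ask whether $\mathcal{A}$ accepts \emph{some} word $a^T$. The key observation is \cref{lem:PRG-SWP}: if any word is accepted, then the shortest accepted word has length at most $2^{\card{\states}}$, which is exponential in the number of states but has a \emph{polynomial-length binary encoding}. This is exactly why the binary encoding of the target time matters. The obstacle to overcome is that emptiness is an existential question over infinitely many candidate target times $T$, whereas a punctual reachability game fixes one specific $T$. The natural way to collapse the existential quantifier is to augment the game graph so that \PONE may, in effect, choose when to ``commit'' to reaching the target.

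\medskip

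Concretely, I would modify the game graph as follows. Let $N=2^{\card{\states}}$ be the shortest-word bound from \cref{lem:PRG-SWP}. I add a fresh target vertex (say a self-looping absorbing sink) and redirect the old accepting vertices of $\mathcal{A}$ so that, upon reaching an accepting state, \PONE can move into the sink and then idle there, consuming the remaining time steps. I then set the fixed target time of the punctual reachability game to exactly $N$. The intended equivalence is: \PONE wins the punctual reachability game at time $N$ on the modified graph if and only if $\mathcal{A}$ accepts some $a^T$ with $T\le N$, which by \cref{lem:PRG-SWP} is equivalent to $\mathcal{A}$ accepting any word at all, i.e.\ to $L(\mathcal{A})\neq\emptyset$. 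The one subtlety I expect to be the main obstacle is making the ``padding'' respect the alternation correctly: the sink and its idling loop must be controlled so that \PTWO cannot escape once an accepting configuration has been genuinely reached, and so that the padding does not inadvertently let \PONE claim acceptance at a time when no accepting run of the corresponding length exists. Handling this cleanly amounts to ensuring the added gadget is \PONE-owned and absorbing, so that arriving in the sink at \emph{any} time $\le N$ guarantees presence in the sink exactly at time $N$.

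\medskip

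Finally, I would verify that the reduction is computable in polynomial time: the modified graph adds only a constant number of vertices and edges to $\mathcal{A}$, the target $N=2^{\card{\states}}$ is written down directly in binary using $\card{\states}$ bits, and no part of the construction inspects the exponentially large expansion. Since emptiness of unary AFA is \PSPACE-hard and the reduction is polynomial, solving punctual reachability games with a binary-encoded target time is \PSPACE-hard as claimed. I would emphasise that the crux of the argument is entirely carried by the shortest-word bound of \cref{lem:PRG-SWP}, which is what allows a single fixed exponential target time to witness an existential emptiness query.
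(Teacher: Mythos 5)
Your overall plan (reduce from unary AFA non-emptiness, invoke \cref{lem:PRG-SWP} to bound the witness time by $2^{\card{\states}}$, and collapse the existential quantifier over $T$ by letting \PONE pad the play up to a single fixed binary target time) is exactly the paper's strategy. However, your concrete gadget is wrong: you pad at the \emph{end} of the play, via an absorbing \PONE-owned sink reachable from the accepting states, whereas the padding must happen at the \emph{front}, before the strategy tree branches. The problem is synchronisation. Acceptance of a single word $a^T$ requires that \emph{every} branch of the run tree sits in an accepting state at the \emph{same} time $T$. With your sink, each branch may enter an accepting state at its own time $T_b$, hop into the sink, and idle; all branches then meet in the sink at time $N$ even though no single $T$ works for all of them. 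Concretely, take a universal initial state $u$ with successors $x$ and $y$, where $x\to f$, $y\to y'\to f$, and $f\to g\to f$ with only $f$ accepting: one branch visits $f$ only at even times, the other only at odd times, so the AFA is empty; yet in your modified game both branches reach the sink by time $4$ and \PONE wins. Indeed, your game merely asks whether \PONE can force reaching $F$ \emph{within} $N$ steps, which is ordinary reachability and solvable in polynomial time, so no hardness is transferred. (The worry you raise about the gadget being \PONE-owned and absorbing does not address this; nor does it fix the secondary off-by-one issue when the shortest accepted word has length exactly $N$.)

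The repair is the paper's construction: add a fresh \PONE-owned initial vertex $\init'$ with a self-loop and an edge to the original initial vertex $\init$, and set the target time to $T'=2^{\card{\states}}$ with the target set unchanged. \PONE then chooses how long to wait at $\init'$ \emph{once}, on the unique prefix of the play before any branching can occur, so that every branch of the subsequent strategy tree has the same residual time budget $T\le T'$; she wins the punctual game at time $T'$ iff she wins the original game at some $T\le 2^{\card{\states}}$, which by \cref{lem:PRG-SWP} is equivalent to non-emptiness.
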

\begin{proof}
    We reduce the non-emptiness problem
    of AFA over unary alphabets. In our terminology this is the decision problem
    if, 
    for a given a reachability game
    $G=(\states,\edges,\init,F)$
    there exists some $T\in\N$ so that \PONE wins the punctual reachability game at target time $T$.
    This problem is \PSPACE-complete \cite{H1995}.

    By \cref{lem:PRG-SWP}, positive instances can be witnessed by a small target $T\le 2^{\card{\states}}$ and so we know that it is \PSPACE-hard to determine the existence of such a small target time that allows \PONE to win.
    
    Consider now the punctual 
    reachability game $G'$ that extends $G$ by a new
    initial vertex $\init'$ that is owned by \PONE and which has 
    a self-loop as well as an edge to the original initial vertex $\init$
    with target time $T'\eqdef 2^{\card{\states}}$.
    In $G'$, \PONE selects some number $T\le T'$ by waiting in the initial vertex for $T'-T$ steps and then starts the game $G$ with the target time $T$.
    Therefore, \PONE wins in $G'$ for target $T'$ if, and only if, she wins for some $T\le2^{\card{\states}}$ in $G$.
    \qed
\end{proof}
\begin{corollary}
    \label{lem:TRG-PSPACE-hard}
    Solving reachability games on finite temporal graphs is
    \PSPACE-hard.
\end{corollary}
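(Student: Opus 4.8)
The plan is to reduce from punctual reachability games, which are \PSPACE-hard by \cref{lem:reachability-LB}. Given a punctual reachability game $G=(\states,\edges,\init,F)$ together with a binary-encoded target time $T$, I would construct a finite temporal graph $G'$ on which the ordinary (untimed) reachability game is won by \PONE exactly when \PONE wins the punctual game at time $T$. The key idea is to use the edge-availability predicate to \emph{gate} access to the target so that it can only be entered at time $T$, thereby internalising the punctual constraint into the temporal structure of the arena.

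Concretely, I would take $G'$ to have vertex set $\states\cup\{g\}$ for a fresh sink vertex $g$, which becomes the sole target of the reachability game. Each original edge $(s,t)\in\edges$ is made available exactly at the times $\{0,1,\dots,T-1\}$, so that the first $T$ moves simulate a play of $G$; and from every $f\in F$ I add a gate edge $f\step{T}g$ available only at time $T$. Since the constraints ``$x\le T-1$'' and ``$x=T$'' are expressible in $\exists$PA with $T$ written in binary, and $G'$ has only one extra vertex, this is a polynomial-time reduction. Moreover the horizon of $G'$ is $\hor(G')=T\in\N$, so $G'$ is indeed a finite temporal graph.

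For correctness, observe that at time $T$ the token sits at some vertex, and the only edges available at that time are the gate edges leaving the vertices of $F$. Hence if the token is at some $f\in F$ at time $T$, then the unique available move leads to $g$, regardless of which player owns $f$; and if the token is at any vertex outside $F$ at time $T$, no edge is available, the play is stuck, and $g$ is never reached. Symmetrically, a play that gets stuck before time $T$ also fails to reach $g$, consistently with \PONE losing the punctual game in that case. Consequently \PONE reaches $g$ in $G'$ if and only if she can force the token to occupy a vertex of $F$ at exactly time $T$, which is precisely the punctual winning condition, and \PSPACE-hardness transfers.

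The main point requiring care is the ownership of target vertices: in the punctual game it suffices for \PONE to \emph{occupy} $F$ at time $T$, whereas a reachability game requires the token to actually be moved into $g$. I would resolve this by making the gate edges the only edges available at time $T$, so that moving into $g$ is forced even from \PTWO-owned vertices. A secondary subtlety is that, because reachability is satisfied on first visit while every edge of a finite temporal graph must eventually vanish, I must avoid adding any perpetual self-loop at $g$; instead the token simply gets stuck at $g$ after time $T+1$, which already witnesses a win for \PONE and keeps the horizon finite.
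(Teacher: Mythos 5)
Your proposal is correct and follows essentially the same reduction as the paper: add a fresh target vertex reachable only from $F$ via gate edges available exactly at time $T$, so that winning the ordinary reachability game on the resulting finite temporal graph coincides with winning the punctual game. Your extra care in switching the original edges off at time $T$ (so the gate move is forced even from \PTWO-owned vertices of $F$) is a sensible tightening of the same construction; the paper keeps the original edges alive on $[0,T]$ and leaves that ownership point implicit.
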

\begin{proof}
    We reduce the punctual reachability game with target $T$ to an ordinary reachability game on a finite temporal graph.
    This can be done by introducing a new vertex $u$ as the only target vertex,
    so that it is only reachable via edges from vertices in $F$ at time exactly $T$.
    That is $\edges(s,u) \eqdef \{T\}$
    and $\edges(s,t)=[0,T]$ for all $s,t\in\states\setminus\{u\}$.
    Now \PONE wins the reachability game for target $u$ if, and only if, she wins the punctual reachability game with target $F$ at time $T$.   
    \qed
\end{proof}


A matching \PSPACE\ upper bound for solving punctual reachability games, as well as reachability games on finite temporal graphs can be achieved by computing the winning region backwards as follows.\footnote{For readers familiar with reachability games, the notion $\Pre[1](S)$ above is very similar to, but not the same as 
the $k$-step attractor of $S$: The former contains states from which \PONE can force to see the target in \emph{exactly} $k$ steps, whereas the latter
contains those where the target is reachable in $k$ \emph{or fewer} steps.}
For any game graph with vertices $\statespart$, set $S\subseteq\states$ and $i\in\{1,2\}$, let
$\Pre[i](S)\subseteq V$ denote the set of vertices from which Player~$i$ can force to reach $S$ in one step.
\[
    \Pre[i](S)\eqdef 
\{v\in\states_i\mid \exists (v,v')\in \edges. v'\in S\}
\cup \{v\in\states_{1-i}\mid \forall (v,v')\in \edges. v'\in S\}
\]
A straightforward induction on the duration $T$ shows that
Player~$i$ wins the punctual reachability game with target time $T$ from vertex $s$
if, and only if $s\in\Pre[i]^T(F)$, the $T$-fold iteration of $\Pre[i]$ applied to the target set $F$.

Notice that knowledge of $\Pre[i](S)$ is sufficient to compute 
$\Pre[i]^{k+1}(S)$. 
By iteratively unfolding the definition of $\Pre[i]^{k}$, we can
compute $\Pre[1]^T(F)$ from $\Pre[1]^0(F)=F$
in polynomial space\footnote{To be precise, na\"ively unfolding the definition requires $\?O(T + \card{\states}^2)$ time, exponential in (the binary encoded input) $T$, and $\?O(\card{\states}+\log(T))$ space to memorise the current set $\Pre[k]\subseteq \states$ as well as the time $k\le T$ in binary.}.
Together with \cref{lem:reachability-LB} we conclude the following.

\begin{theorem}
    \label{thm:reachability}
    Solving punctual reachability games with target time $T$ encoded in binary
    is \PSPACE-complete.
\end{theorem}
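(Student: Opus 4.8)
The plan is to establish \PSPACE-completeness by pairing a matching lower and upper bound. The lower bound is already in hand: \cref{lem:reachability-LB} shows, via a reduction from unary AFA non-emptiness, that solving punctual reachability games with a binary-encoded target time is \PSPACE-hard. The only remaining task is therefore to argue that the problem lies in \PSPACE.

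For the upper bound I would exploit the backwards fixpoint characterisation through the one-step controllable predecessor operator $\Pre[i]$. First I would prove, by induction on the target time $T$, that Player~$i$ wins the punctual reachability game with target time $T$ from a vertex $s$ if, and only if, $s\in\Pre[i]^T(F)$. The base case $T=0$ is immediate, since winning at time $0$ means already occupying a target vertex, so $\Pre[i]^0(F)=F$. For the inductive step, winning in exactly $T$ steps from $s$ requires that the owner of $s$ can move (if it is Player~$i$), or is forced to move (if it is the opponent), into some vertex from which Player~$i$ wins in exactly $T-1$ steps, i.e.\ into $\Pre[i]^{T-1}(F)$; this is precisely what $\Pre[i](\Pre[i]^{T-1}(F))=\Pre[i]^T(F)$ expresses.

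For the space bound, the key observation is that computing $\Pre[i]^{k+1}(F)$ needs only the single set $\Pre[i]^k(F)\subseteq\states$ and not the full history of iterates. I would therefore iterate the operator in place, maintaining at each step only the current subset of $\states$ (at most $\card{\states}$ bits) together with a counter $k\le T$ stored in binary (polynomially many bits in the input, as $T$ is given in binary). Each application of $\Pre[i]$ merely inspects the edge relation and is computable in polynomial time, hence in polynomial space; since we overwrite rather than accumulate the iterates, the total working space stays polynomial. Iterating the operator and then testing membership of $\init$ in the resulting set decides the game in \PSPACE, which together with \cref{lem:reachability-LB} yields completeness.

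The step I expect to require the most care is the correctness of the $\Pre[i]$ characterisation, specifically ensuring it captures reaching the target in \emph{exactly} $T$ steps rather than within $T$ steps. The punctual semantics is exactly what distinguishes $\Pre[i]^T$ from the usual $T$-step attractor, and conflating the two would break the equivalence. The space analysis itself is routine once one notes that the iteration can be carried out in place using only a binary step counter as auxiliary storage.
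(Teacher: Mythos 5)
Your proposal is correct and follows essentially the same route as the paper: the lower bound is delegated to \cref{lem:reachability-LB}, and the upper bound iterates the exact-step operator $\Pre[1]$ backwards from $F$, keeping only the current subset of $\states$ and a binary step counter, which is precisely the paper's argument (including its footnoted space analysis). Your explicit caution about distinguishing the exact-$T$-step semantics of $\Pre[1]^T$ from the usual attractor matches the paper's own footnote on this point.
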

The same approach works for reachability games on finite \emph{temporal} graphs
if applied to the expansion up to horizon $\hor(G)$,
leading to the same time and space complexity upper bounds.
The only difference is that computing $\Pre^k(F\x\{T\})$ requires
to check edge availability at time $T-k$.

\begin{theorem}
    \label{lem:TRG-PSPACE}
    Solving reachability games on finite temporal graphs is \PSPACE-complete.
\end{theorem}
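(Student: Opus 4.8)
The statement asserts \PSPACE-completeness, so the plan is to argue the two matching bounds separately. The lower bound is already in hand: \cref{lem:TRG-PSPACE-hard} reduces punctual reachability games---which are \PSPACE-hard by \cref{lem:reachability-LB}---to reachability games on finite temporal graphs. What remains is the \PSPACE\ upper bound, and for this I would lift the backward controllable-predecessor computation behind \cref{thm:reachability} from the static, punctual setting to the temporal one, evaluated over the expansion.

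Concretely, for each time layer $t$ let $W_t\subseteq\states$ collect the vertices from which \PONE wins when the token sits there at time $t$. Because the graph is finite, no edge is available past the horizon, which gives a clean base case: $W_t=F$ for every $t>\hor(G)$, since the token is then frozen and \PONE wins precisely when already on a target. For $t\le\hor(G)$ I would use the recurrence $W_t = F\cup\Pre[1]^{(t)}(W_{t+1})$, where $\Pre[1]^{(t)}$ is the time-parameterised one-step predecessor that, in contrast with $\Pre[1]$, admits a move along $(v,v')$ only when $t\in\edges(v,v')$; the term $F$ records that occupying a target is already a win. A routine induction on the number $\hor(G)+1-t$ of remaining layers---the temporal analogue of the fact that $\Pre[1]^T(F)$ characterises punctual winning---then yields that \PONE wins from $\init$ if and only if $\init\in W_0$.

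To stay within polynomial space I would sweep this recurrence backwards, from $t=\hor(G)+1$ down to $t=0$, keeping in memory only the current set $W_t$ as a subset of $\states$ together with the time $t$ in binary. The counter uses $\?O(\log\hor(G))$ bits, which is polynomial: a finite $\exists$PA-definable availability set is a finite union of singletons whose values are singly-exponentially bounded in the formula size, so $\hor(G)\le 2^{\text{poly}}$ and a computable bound of this form may be used to anchor the sweep from a layer on which $W_t=F$ already holds. Each backward step rebuilds $W_t$ from $W_{t+1}$ by testing, for every vertex and candidate successor, whether the edge is available at time $t$; this membership test is the sole point where the representation matters, and it stays within \PSPACE\ by assumption (indeed in \NP\ for $\exists$PA). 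Hence the whole procedure runs in polynomial space.

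I expect the only genuine subtlety to be the treatment of the horizon: one must verify that the finiteness hypothesis really renders the game static beyond $\hor(G)$, so that $W_t$ stabilises to $F$ and the backward recurrence is well-founded, and that a polynomial-bit bound on $\hor(G)$ is available to initialise the sweep. With those two points pinned down, both correctness and the space accounting follow as direct adaptations of \cref{thm:reachability}.
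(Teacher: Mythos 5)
Your proof is correct and follows essentially the same route as the paper: a backward sweep over the time-expanded graph using one-step controllable predecessors, storing only the current layer $W_t\subseteq\states$ and a binary time counter. The only (harmless) difference is that you accumulate $F$ at every layer via $W_t=F\cup\Pre[1]^{(t)}(W_{t+1})$, thereby computing the genuine ``reach $F$ at some time'' winning region, whereas the paper phrases its computation punctually as $\Pre[1]^T(F\x\{T\})$; your version, together with the explicit exponential bound on $\hor(G)$, is if anything slightly more careful.
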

\begin{proof}
    Consider a temporal game with vertices $\statespart$, edges $\edges:\states^{2}\to2^{\N}$
    target vertices $F\subseteq\states$ and where $T=\hor(G)$ is the latest time an edge is available.
    We want to
    check if starting in an initial state $\init$ at time $0$,
    \PONE can force to reach $F$ at time $T$.
    In other words, for the game played on the expansion up to time $T$
    we want to decide if $(\init,0)$ is contained in $\Pre[1]^T(F\x\{T\})$.

    

    By definition of the expansion,
    we have
    $\Pre(S\x\{n\}) \subseteq \states\x\{n-1\}$ 
    for all $S\subseteq\states$ and $n\le T$.
    Since we can check the availability of an edge at time $n$ in polynomial space,
    we can iteratively compute 
    $\Pre^n(F\x\{T\})$ backwards, starting with $\Pre[1]^0(F\x\{T\}) = F\x\{T\}$,
    and
    only memorising the current iteration $n\le T$ and a set $W_n\subseteq \states$
    representing $\Pre[1]^n(F\x\{T\})=W_n\x\{T-n\}$.
    \qed
\end{proof}



\section{Parity Games}
\label{sec:parity}
\newcommand{\bigstep}{B}
We consider Parity games played on periodic temporal graphs.
As input we take a temporal graph $G=(\states,\edges)$ with period $K$,
a partitioning $\statespart$ of the vertices,
as well as a colouring $\col:\nodes\to\colours$ that associates a colour out of a finite set $\colours\subset\mathbb{N}$ of colours to every state.

It will be convenient to write
$\col(\pi) \eqdef \max\{\col(s_i)\mid 0\le i\le k\}$ for the maximal colour of any vertex visited
along a finite path
$\pi=(s_0,0)(s_1,1)\ldots(s_k,k)$.
%
The following
relations $\ReachCols{\strategy}{s}$ capture the guarantees provided by a strategy $\sigma$ if followed for
one full period from vertex $s$.

\begin{definition}
    For a strategy
    $\strategy$ and 
    vertex $s\in\states$
    define
    $\ReachCols{\strategy}{s}\subseteq \states\x\colours$ 
%
    be the relation containing 
    $(t,c)\in\ReachCols{\strategy}{s}$ if, and only if,
    there exists a finite play $\pi=(s,0)\ldots (t,\period)$ consistent with $\sigma$,
    that starts in $s$ at time $0$, ends in $t$ at time $K$,
    and the maximum colour seen on the way is $\col(\pi)=c$.
    We call $\ReachCols{\strategy}{s}$ 
    the \emph{summary} of $s$ with respect to strategy $\strategy$.

    A relation $B\subseteq \states\x\colours$
    is $s$-\emph{realisable} if
    there is a strategy $\strategy$
    with
    $B=\ReachCols{\strategy}{s}$.
\end{definition}

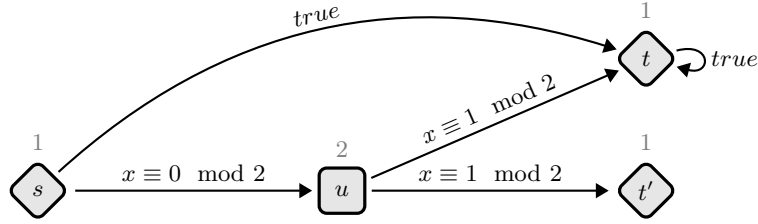
\begin{figure}[t]
    \centering
    \begin{tikzpicture}
    \node[estate, label={1}](source) at (0,0){$s$};
    \node[ostate, label={2}](safety) at (4,0){$u$};
    \node[estate, label={1}](sink) at (8,1.75){$t$};
    \node[estate, label={1}](sinkprime) at (8,0){$t'$};
    
    \path[every loop/.append style=-{>}]
    (source) edge node [above,sloped]{$x\equiv0\mod2$} (safety)
    (source) edge [bend left] node [above,sloped]{$\true$} (sink)
    (safety) edge node [above,sloped]{$x\equiv1\mod2$} (sink)
    (safety) edge node [above,sloped]{$x\equiv1\mod2$} (sinkprime)
    (sink) edge [loop right] node [auto]{$\true$} (sink)
    ;
\end{tikzpicture}
    \caption{The game from \cref{ex:realisable}. Labels on vertices and edges denote colours and available times, respectively.
    The graph has period $2$. In two rounds, \PONE can force to end in $t$ having seen colour $1$, or in either $t$ or $t'$ but having seen a better colour $2$.}
    \label{fig:example1}
\end{figure}
\begin{example}
    \label{ex:realisable}
    Consider the game in \cref{fig:example1}
    where vertex $u\in\states_2$ has colour $2$ and all other vertices have colour $1$.
    The graph has period $K=2$.
    The relations
    $\{(t,1)\}$ and 
    $\{(t,2),(t',2)\}$ are $s$-realisable,
    as witnessed by the strategies $\sigma(s)=t$
    and $\sigma(s)=u)$, respectively.
    However, $\{(t,2)\}$ is not
    $s$-realisable as no \PONE strategy guarantees to visit $s$ then $u$ then $t$.
    %
\end{example}

\begin{lemma}\label{lem:realisability}
    Checking $s$-realisability is in \PSPACE.
    That is, one can verify in polynomial space for a given
    temporal Parity game, state $s\in\states$ and relation
    $B\subseteq \states\x\colours$
    whether $B$ is $s$-realisable.
\end{lemma}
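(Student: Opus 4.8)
The plan is to reduce $s$-realisability to punctual reachability problems on a polynomially larger static graph, and then invoke \cref{thm:reachability}. First I would pass to the \emph{augmented expansion}: the static graph $\hat G$ with vertices $\states\times\colours$ in which a move goes from $(v,c)$ to $(v',\max\{c,\col(v')\})$ whenever the temporal edge $v\to v'$ is available at the relevant time, so that the second component records the largest colour seen so far. Ownership is inherited from the first component, and since the edge predicate is in \PSPACE\ the availability of a move at a given time is checkable in polynomial space. After exactly $\period$ steps from $(s,\col(s))$ a play sits at some $(t,c)$ with $c=\col(\pi)$; writing $B'\eqdef\{(t,c)\mid (t,c)\in B\}$ for the admissible outcomes, a strategy $\sigma$ witnesses $B=\ReachCols{\sigma}{s}$ exactly when its set of time-$\period$ outcomes in $\hat G$ equals $B'$.

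I would then split the equality $\ReachCols{\sigma}{s}=B$ into its two inclusions. The inclusion $\ReachCols{\sigma}{s}\subseteq B$ (\emph{soundness}) says that $\sigma$ forces every play into $B'$ at time $\period$; this is precisely a punctual reachability game in which \PONE\ is the reacher, so by \cref{thm:reachability} it is in \PSPACE. Computing the winning region backwards as $W_i\eqdef\Pre^{\period-i}(B')$ — storing only the current layer $W_i\subseteq\states\times\colours$ together with a binary counter $i\le\period$ — identifies the sound strategies as exactly those that never leave $W$, so a sound strategy exists iff $(s,\col(s))\in W_0$. For the reverse inclusion $B\subseteq\ReachCols{\sigma}{s}$ (\emph{coverage}) I would work inside the sub-arena $W$, where every successor of a \PTWO-vertex already lies in $W$; reachability of a single target $(t,c)$ by a play consistent with some sound $\sigma$ is then the existence of a length-$\period$ path in $W$, which is again a punctual reachability question and hence in \PSPACE.

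The hard part, and the step I expect to be the main obstacle, is that soundness and coverage must be met by \emph{one} strategy. The two inclusions assign opposite roles to the players — \PONE\ confines the outcomes for soundness but must \emph{permit} \PTWO\ to reach each target for coverage — so they cannot be folded into a single reachability game with fixed ownership, and they genuinely interact: a commitment at a \PONE-vertex needed to reach one target may preclude another, so checking the inclusions target-by-target is unsound (already for a one-step game where \PONE\ chooses between two leaves, each leaf is individually reachable yet the pair is not realisable). Moreover $\period$ is exponential and the family of realisable relations can be exponentially large, so neither $\sigma$ nor the collection of achievable summaries may be enumerated. The plan is to capture coverage by the natural alternating recursion on configurations $(v,c,R)$ with $R\subseteq B'$ the outstanding targets — existential at \PONE-vertices (choose one successor, keep $R$) and, at \PTWO-vertices, an existential split of $R$ that is universally delegated to all successors — and to exploit that strategies may depend on the full history, so that this recursion is witnessed by punctual reachability checks glued along the expansion. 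Keeping this in polynomial space is where the real work lies: one must avoid materialising the exponential layer $\{(v,c,R)\}$ and instead evaluate the recursion by a single backward sweep over time, reusing the $\Pre$-machinery above and leaning on the small-witness bound of \cref{lem:PRG-SWP} to control how the outstanding-target component evolves; I expect the cleanest route to package the whole check as one alternating polynomial-time procedure, which then lands in \PSPACE.
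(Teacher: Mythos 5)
Your first half is, in substance, exactly the paper's proof: the paper builds the same product graph over $\states\x\colours$ recording the maximal colour seen so far, and reduces the check to a punctual reachability game from $(s,\col(s))$ with target set $B$ and target time $\period$, solved backwards with $\Pre$ in polynomial space via \cref{lem:TRG-PSPACE}. As you observe, this only certifies the inclusion $\ReachCols{\sigma}{s}\subseteq B$, and you are right to be uneasy: under the literal (equality) definition the claimed equivalence is not exact. In \cref{ex:realisable}, \PONE wins the punctual reachability game with target $\{(t,1),(t,2),(t',2)\}$ by moving to $t$, yet no single strategy realises that set, since from $s$ she commits to one successor and obtains either $\{(t,1)\}$ or $\{(t,2),(t',2)\}$.

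The gap is that your coverage half is a plan, not a proof, and the routes you sketch do not go through. The recursion on configurations $(v,c,R)$ has depth $\period$, which is exponential in the binary-encoded input, so it cannot be packaged as an alternating \emph{polynomial-time} procedure; and a backward sweep over time would have to store, per layer, which of the exponentially many triples $(v,c,R)$ are good, so it does not fit the polynomial-space $\Pre$ machinery. \cref{lem:PRG-SWP} is inapplicable here: it bounds an existentially quantified target time, whereas $\period$ is fixed by the input. A further trap is that what is achievable at time $\period$ depends not only on which configurations the strategy tree currently occupies but on \emph{how many} branches sit on each of them (a single branch at a vertex owned by \PONE can take only one successor, while two branches on the same configuration may diverge), so any set-valued summary of the tree is lossy. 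The pragmatic repair is to not prove coverage at all: the backward direction of \cref{lem:par-correctness} only ever uses $\ReachCols{\sigma_s}{s}\subseteq\Post{s}$, and its forward direction produces exact summaries, so the certificate argument survives if ``realisable'' is weakened to the inclusion --- which is precisely the property your soundness check (and the paper's own proof) decides in \PSPACE. As written, however, your proposal leaves the equality version of the lemma unproved.
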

\begin{proof}
    We reduce checking realisability to solving a reachability game on a temporal graph that is only polynomially larger.
    More precisely, given a game $G=(\states,\edges,\col)$
    consider the game $G'=(\states',\edges',\col')$ over vertices $\states'\eqdef\states\x\colours$
    that keep track of the maximum colour seen so far.
    That is, the ownership of vertices and colours are lifted directly as
    $(s,c)\in\states'_1\iff s\in\states_1$ and 
    $\col'(s,c)\eqdef\col(s)$,
    and for any $i\in\N$, $s,t,\init\in\states$, $c,d\in C$,
    we let $(t,d)$ be an $i$-successor of $(s,c)$ if, and only if, both
    $t$ is an $i$-successor of $s$
    and $d=\max\{c,\col(t)\}$.

    Consider some relation $B\subseteq\states\x\colours$.
    We have that $B$ is $s$-realisable
    if, and only if,
    \PONE wins the punctual reachability game on $G'$
    from vertex $(s,\col(s))$ at time $0$, towards target vertices $B\subseteq\states'$
    at target time $\period$.
    Indeed, any winning \PONE strategy in this game witnesses that $B$ is  $s$-realisable and vice versa.
    By \cref{lem:TRG-PSPACE}, the existence of such a winning strategy can be verified in polynomial space
    by backwards-computing the winning region.
    \qed
\end{proof}

The following defines a small, and \PSPACE-verifiable certificate for \PONE to win the parity game on a periodic temporal graph.
\begin{definition}[Certificates]
    Given temporal parity game $(\states,\edges,\col)$ with period $\period$,
    a \emph{certificate} for \PONE winning the game from initial vertex $\init\in\states$
    is a multigraph
    where the vertex set $\states'\subseteq\states$ contains $\init$,
    and edges $\edges'\subseteq\states'\x\colours\x\states'$ are labelled by colours, such that
    \begin{enumerate}
        \item For every $s\in\states'$, the set
            $\Post{s}\eqdef\{(t,c) \mid (s,c,t) \in \edges'\}$
            is $s$-realisable.
        \item The maximal colour on every cycle reachable from $\init$ is even.
    \end{enumerate}
\end{definition}

Notice that condition 1 implies that no vertex in a certificate is a deadlock.
A certificate intuitively allows to derive \PONE strategies based on those witnessing the realisability condition.

\begin{figure}[t]
	\begin{center}
                    \begin{tikzpicture}
        \node[state] (s) at (0, 0){$s$};
        \node[state] (v) at (8, 0){$v$};
        \node[state] (t) at (0, -3){$t$};
        \node[state](r) at (4, -3){$r$};

        \draw  (s) edge [loop left] node {$2$} (s);
        \draw  (t) edge [loop left] node {$2$} (t);
        \draw  (s) edge [bend right] node[left]{$2$} (t);
        \draw  (s) edge  node[auto] {$4$} (r);
        \draw  (t) edge [bend right] node[right] {$2$} (s);
        \draw  (v) edge node[above] {$3$} (s);
        \draw  (v) edge node[above] {$3$} (t);
        \draw  (v) edge node[above] {$4$} (r);
        \draw (t) edge  node[above] {$4$}(r);
        \draw  (r) edge [loop right] node[right] {$4$} (r);
\end{tikzpicture}
	\end{center}
	\caption{A certificate that \PONE wins the game in \cref{ex:prg}
        from state $v$ at time $0$.
        }
	\label{fig:cert}
\end{figure}
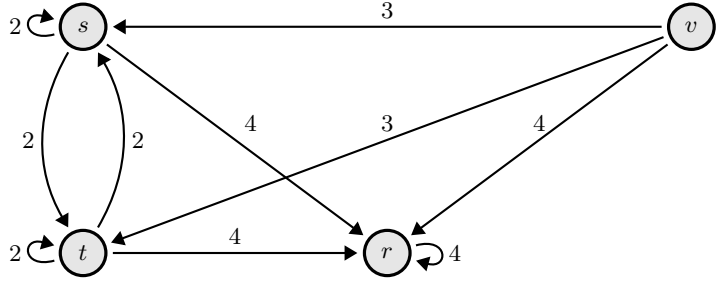

\begin{example}
    Consider the game from \cref{ex:prg} played on the temporal graph with period 15. A certificate for \PONE winning from state $v$ at time $0$
    is depicted in \cref{fig:cert}.
    Indeed, the \PONE strategy mentioned in \cref{ex:prg} (aim to alternate between $s$ and $t$) witnesses that 
    $\Post{v} = \{(s,3), (t,3), (r,4)\}$ is $v$-realisable
    because it allows \PONE to enforce that after $K=15$ steps from $v$,
    the game ends up in one of those states via paths whose colour is dominated by $\col(v)=3$ or $\col(r)=4$.
\end{example}

\begin{lemma}\label{lem:par-correctness}
   \PONE wins the parity game on $G$ from vertex $\init$ 
   if, and only if, there exists a certificate.
\end{lemma}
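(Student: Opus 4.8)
The plan is to prove the two implications separately; in both the key device is to read a play's parity outcome off the colour labels of the per-period summaries $\ReachCols{\strategy}{s}$, using crucially that $G$ has period $K$ so that a one-period strategy may be replayed starting at any multiple of $K$.

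\textbf{Certificate $\Rightarrow$ \PONE{} wins.}
Given a certificate with vertices $V'$ and labelled edges $E'$, condition~1 and \cref{lem:realisability}'s underlying realisability give, for each $s\in V'$, a strategy $\sigma_s$ with $\ReachCols{\sigma_s}{s}=\Post{s}$. I would stitch these into a \PONE{} strategy on the expansion that, during the $m$-th period (steps $mK$ to $(m{+}1)K$), follows $\sigma_{v_m}$ shifted to start at time $mK$; this is legal because periodicity makes the available edges at times $mK,mK{+}1,\dots$ coincide with those at $0,1,\dots$. Here $v_0=\init$ and $v_{m+1}$ is the state reached at time $(m{+}1)K$. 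The point of using the \emph{equality} $\ReachCols{\sigma_{v_m}}{v_m}=\Post{v_m}$ (rather than mere inclusion) is that \emph{every} \PTWO{} response within the block ends in some $(v_{m+1},c_{m+1})\in\Post{v_m}$, so $(v_m,c_{m+1},v_{m+1})\in E'$ and $v_{m+1}\in V'$; hence the strategy is well-defined and the boundary states $v_0v_1v_2\cdots$ form an infinite walk through the certificate from $\init$.

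\textbf{Parity bookkeeping.}
The maximal colour actually seen in block $m$ is exactly its label $c_{m+1}$. I would then show that the maximal colour seen infinitely often along the whole play equals $C^\ast$, the largest label occurring on infinitely many blocks: one inclusion holds because a block with max-label $C^\ast$ contains a vertex of colour $C^\ast$, and the other because a colour exceeding $C^\ast$ seen infinitely often would force a per-block maximum exceeding $C^\ast$ infinitely often, contradicting maximality of $C^\ast$. Next, among the edges used infinitely often choose one, $e^\ast$, with label $C^\ast$; revisiting the source of $e^\ast$ (which recurs, being used infinitely often) yields a closed boundary walk through $e^\ast$ all of whose edges have label $\le C^\ast$, so a simple cycle through $e^\ast$ has maximal label exactly $C^\ast$ and is reachable from $\init$. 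Condition~2 then forces $C^\ast$ to be even, so \PONE{} wins every consistent play. The identical correspondence (and the same choice of $C^\ast$) will be reused in the converse.

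\textbf{\PONE{} wins $\Rightarrow$ certificate.}
I would first collapse the expansion to the finite parity game on $\states\times\{0,\dots,K-1\}$ with edges $(s,m)\to(t,(m{+}1)\bmod K)$ whenever the edge is available at time $m$; plays from $(\init,0)$ correspond bijectively to expansion plays from $(\init,0)$ with the same parity outcome, so the two games have the same winner. Applying \cref{thm:uprop} to this finite game yields a positional winning strategy $\tau$ from $(\init,0)$, whose lift to the expansion is \emph{periodic}. Let $V'$ be the states reachable under $\tau$ at period boundaries (folded time $0$), set $\Post{s}\eqdef\ReachCols{\sigma_s}{s}$ where $\sigma_s$ plays $\tau$ for one period from $(s,0)$, and let $E'$ be the induced labelled edges; then condition~1 holds by construction, and $E'$ stays inside $V'$ because a period-boundary state reached under $\tau$ from a reachable boundary state is again reachable under $\tau$. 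For condition~2 I argue by contradiction: an odd-maximum cycle in the certificate reachable from $\init$ would, by concatenating the realisability witnesses of its edges (each shifted by a multiple of $K$, legal by periodicity) after a $\tau$-consistent approach from $\init$, give a single $\tau$-consistent play whose per-block maxima cycle through the cycle's labels; by the bookkeeping above its maximal colour seen infinitely often is the odd cycle-maximum, making the play losing for \PONE{} and contradicting that $\tau$ is winning. The main obstacle in both directions is precisely this bookkeeping---matching ``maximal colour seen infinitely often along the infinite play'' with ``maximal label on the recurrent cycle of the boundary walk''---together with the care needed to replay period strategies at arbitrary multiples of $K$ via periodicity and to exploit the equality $B=\ReachCols{\strategy}{s}$ so that every \PTWO{} move lands on a genuine certificate edge.
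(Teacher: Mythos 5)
Your proof is correct and follows essentially the same route as the paper: the backward direction stitches the per-period realisability witnesses $\sigma_s$ into a strategy on the expansion and maps each consistent play to an infinite walk in the certificate whose edge labels are the per-block dominant colours, and the forward direction folds the expansion into the finite parity game on $\states\times\{0,\dots,K-1\}$, takes a positional (hence periodic) uniform winning strategy, and reads the certificate off its summaries, with the cycle condition verified by the same contradiction argument. Your additional bookkeeping (identifying $C^\ast$ and extracting a reachable simple cycle through a recurrent edge of that label) only makes explicit a step the paper leaves terse, so the two proofs are essentially the same.
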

\begin{proof}
    For the backward implication we argue that a certificate $C$
    allows to derive a winning strategy for \PONE in the parity game $G$.
    By the realisability assumption (1),
    for each vertex $s\in\states$ there must exist a \PONE strategy $\sigma_s$
    with $\ReachCols{\sigma_s}{s}=\Post{s}$ that tells her how to play in $G$
    for $\period$ rounds if the starting time is a multiple of $\period$.
    Moreover, suppose she plays according to $\sigma_s$ for $\period$ rounds
    and let $t$ and $c$ be the vertex reached and maximal colour seen on the way.
Then by definition of the summaries, $(t,c)\in\ReachCols{{\sigma_s}}{s}=\Post{s}$
    and so in the certificate $C$ there must be some edge $s\step{c}t$.
    
    Suppose \PONE continues to play in $G$ like this forever:
    From time $i\cdot\period$ to $(i+1)\cdot\period$ she plays
    according to some strategy $\strategy_{s_{i}}$
    determined by the vertex $s_i$ reached at time $i\cdot \period$.
    Any consistent infinite play $\rho$ in $G$, chosen by her opponent,
    describes an infinite walk $\rho'$ in $C$ such that the colour seen in 
    any step $i\in\N$ of $\rho'$ is precisely the dominant colour 
    on $\rho$ between rounds $i\period$ and $(i+1)\period$.
    Therefore the dominant colours seen infinitely often on $\rho$ and $\rho'$ are the same and, by certificate condition (2) on the colouring of cycles, even.
    We conclude that the constructed strategy for \PONE is winning.

    For the forward implication, assume that \PONE wins the game on $G$ from vertex $s$ at time $0$. 
    Since the game $G$ is played on a temporal graph with period $\period$,
    its expansion up to time $\period-1$
    is an ordinary parity game
    on a static graph with vertices $\nodes\x\{0,1,\ldots,\period-1\}$ where the second component indicates the time
    modulo $\period$.
    Therefore, by positional determinacy of parity games (\cref{thm:uprop}), we can assume that
    \PONE wins in $G$ using a strategy $\strategy$ that is itself periodic. That is,
    $\strategy(hv) = \strategy(h'v)$
    for any two histories $h,h'$ of lengths $\length{h}\equiv \length{h'} \mod \maxtime$.
    Moreover, we can safely assume that $\sigma$ is uniform, meaning that it is winning from any vertex $(s,0)$ for which a winning strategy exists.
    Such a strategy induces a multigraph $C=(\states,\edges')$
    where the edge relation is defined by 
    $(s,c,t)\in \edges' \iff (t,c)\in\ReachCols{\strategy}{s}$.
    It remains to show the second condition for $C$ to be a certificate,
    namely that any cycle in $C$, reachable from the initial vertex $\init$, has an even maximal colour.
    Suppose otherwise, that $C$ contains a reachable cycle whose maximal colour is odd.
    Then there must be play in $G$ that is consistent with $\strategy$
    and which sees the same (odd) colour infinitely often. 
    But this contradicts the assumption that $\strategy$ was winning in $G$ in the first place.
    \qed
\end{proof}

Our main theorem is now an easy consequence of the existence of small certificates.

\begin{theorem}
    \label{thm:parity}
    Solving parity games on periodic temporal graphs is 
    \PSPACE-complete.
\end{theorem}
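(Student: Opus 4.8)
The plan is to prove both directions of the completeness claim. The upper bound is the substantive one and rests entirely on the certificate characterisation of \cref{lem:par-correctness} together with the realisability check of \cref{lem:realisability}; the lower bound is inherited from punctual reachability games.

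\emph{Membership in \PSPACE.} By \cref{lem:par-correctness}, \PONE wins from $\init$ if and only if a certificate exists, so it suffices to search for one in polynomial space. The crucial point is that a certificate is a polynomial-size object, with size independent of the (possibly exponential) period $\period$: its vertex set $\states'$ is a subset of $\states$, and its labelled edge set $\edges'$ is a subset of $\states\x\colours\x\states$, so it is described by at most $\card{\states}^2\cdot\card{\colours}$ triples. The procedure therefore guesses a candidate certificate --- a subset of vertices containing $\init$ and a set of colour-labelled edges --- using only polynomial space, and then verifies the two defining conditions. Condition~1, that each $\Post{s}$ is $s$-realisable, is discharged by at most $\card{\states}$ calls to the polynomial-space procedure of \cref{lem:realisability}, reusing the same workspace between calls. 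Condition~2, that every cycle reachable from $\init$ has even maximal colour, is a graph property of the polynomial-size certificate and is decidable in polynomial time; for instance, for each odd colour $c$ one removes all edges of colour exceeding $c$ and tests whether some colour-$c$ edge lies on a cycle reachable from $\init$. Since nondeterministic polynomial space coincides with \PSPACE\ by Savitch's theorem, the entire search runs in polynomial space.

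\emph{\PSPACE-hardness.} I would reduce from punctual reachability games, which are \PSPACE-hard by \cref{lem:reachability-LB}. Given an instance $(\states,\edges,\init,F)$ with binary target time $T$ --- which we may assume to be deadlock-free by routing dead ends to a non-target sink --- I would construct a periodic temporal parity game of period $\period=T+1$ over the vertices $\states$ together with two fresh absorbing gadgets: a vertex $w$ of even colour and a vertex $\ell$ of odd colour, each with an always-available self-loop. Every original edge is made available at all local times $0,\ldots,T-1$; at local time $T$, the only available edge from each $f\in F$ goes to $w$ and the only available edge from each $s\notin F$ goes to $\ell$, and every original vertex is given a fixed odd colour. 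A resulting play simulates the static game for its first $T$ steps, and at time $T$ is forced into $w$ exactly when the token occupies a target vertex and into $\ell$ otherwise, regardless of ownership; thereafter it loops in the chosen gadget forever. Hence the dominant colour is even if and only if $F$ is reached at exactly time $T$, so \PONE wins the parity game if and only if she wins the punctual reachability game.

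\emph{Main obstacle.} Almost all of the conceptual difficulty has already been absorbed into \cref{lem:realisability,lem:par-correctness}, so what remains is essentially bookkeeping. On the upper-bound side the one thing to get right is that the certificate size --- and thus the space consumed by the search --- is genuinely independent of the period $\period$; this is precisely what lets the algorithm beat the naive exponential-space unfolding of the expansion. On the lower-bound side the care goes into the gadget: one must check that the constructed graph is indeed periodic, that no unintended deadlocks arise during local times $0,\ldots,T-1$, and that the forced transitions at local time $T$ encode membership in $F$ correctly irrespective of which player owns the current vertex.
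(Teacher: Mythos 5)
Your proposal is correct and follows essentially the same route as the paper: the upper bound guesses a polynomial-size certificate and verifies it via \cref{lem:par-correctness,lem:realisability} inside \NPSPACE{} (closed under Savitch's theorem), and the lower bound is inherited from punctual reachability games (\cref{lem:reachability-LB}). Your explicit period-$(T+1)$ gadget with the even/odd absorbing sinks is a slightly more careful rendering of the hardness direction than the paper's one-line citation, since it produces a genuinely \emph{periodic} (rather than merely finite) temporal graph, but it is the same reduction in substance.
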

\begin{proof}
    Hardness already holds for reachability games \cref{lem:reachability-LB}.
    For the upper bound we show membership in \NPSPACE\ and use Savitch's theorem.
    By \cref{lem:par-correctness} it suffices to guess and verify a candidate certificate $C$.
    These are by definition polynomial in the number of vertices and colours in the given
    temporal parity game.
    Verifying the cycle condition (2) is trivial in polynomial time
    and 
    verifying the realisability condition (1) is in \PSPACE\ by \cref{lem:realisability}.
    \qed
\end{proof}

\begin{remark}
    \label{rem:parity-ultimately-periodic}
    The \PSPACE\ upper bound in \cref{thm:parity} can easily be extended to games on temporal graphs that are \emph{ultimately} periodic, meaning that there exist $T,\period\in\N$ so that for all $n\ge T$,
    $s\step{n}t$ implies $s\step{n+K}t$.
    Such games can be solved by first considering the periodic suffix according to \cref{thm:parity}
    thereby computing the winning region for \PONE at time exactly $T$,
    and then solving the temporal reachability game with horizon $T$.
\end{remark}

\section{Monotonicity}
\label{sec:monotone}
In this section, we consider the effects of monotonicity assumptions on the edge relation with respect to time on the complexity of solving reachability games. 
We first show that reachability games remain \PSPACE-hard even if the edge relation is decreasing (or increasing) with time. We then give a fragment for which the problem becomes solvable in polynomial time.

\paragraph*{Increasing and Decreasing temporal graphs: }
Let the edge between vertices \\$u,v\in\states$ of a temporal graph be referred to as
\emph{decreasing} 
if 
$u\step{i+1}v$ implies $u\step{i}v$ for all $i\in \N$, i.e. edges can only disappear over time.
Similarly, call the edge \emph{increasing}
if for all $i\in\N$ we have
that $u\step{i}v$ implies $u\step{i+1}v$; i.e. an edge available at current time continues to be available in the future.
A temporal graph is decreasing (increasing) if all its edges are.
We assume that the times at which edge availability changes are given in binary. More specifically, every edge is given as inequality constraint
of the form $\Phi_{u,v}(x)\eqdef x\leq n$ (respectively $x\geq n$) for some $n\in \N$. 

Although both restrictions imply that the graph is ultimately static,
we observe that solving reachability games on 
such monotonically increasing or decreasing temporal graphs
remains \PSPACE-complete.

\begin{theorem}
    \label{thm:monotone-decreasing}
	Solving reachability and Parity games on 
	decreasing (respectively increasing)
	temporal graphs is \PSPACE-complete.
\end{theorem}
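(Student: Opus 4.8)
\noindent\emph{Proof plan.}
The plan is to get the upper bound uniformly from the periodic case, and to prove the two lower bounds by adapting the reachability hardness behind \cref{lem:reachability-LB}.

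\medskip
\noindent\textbf{Upper bound.}
Both a decreasing constraint $x\le n$ and an increasing constraint $x\ge n$ stabilise after time $n$: once the current time exceeds every (binary) threshold occurring in the game, the edge relation no longer changes. Hence a monotone temporal graph is \emph{ultimately static}, i.e.\ ultimately periodic with period $1$, and each edge predicate is decidable in polynomial time, being a single comparison of binary numbers. \Cref{thm:parity} together with \cref{rem:parity-ultimately-periodic} therefore places parity games on monotone temporal graphs in \PSPACE, and reachability games are the special case with two colours.

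\medskip
\noindent\textbf{Lower bound, decreasing case.}
Here I reduce from non-emptiness of unary alternating finite automata, the \PSPACE-complete problem underlying \cref{lem:reachability-LB}. Given such an automaton $\auta$ with state set $Q$, form its game graph as usual: existential states belong to \PONE, universal states to \PTWO, and each step consumes one letter. First make every accepting state absorbing and \PONE-owned; this only adds accepted lengths and preserves emptiness, while turning ``$\auta$ accepts $a^{T}$'' into ``\PONE can force the play into an accepting state within $T$ steps''. Now build a \emph{decreasing} temporal graph by declaring all these edges permanently available (which is vacuously decreasing) and adding a fresh target $u$ with a \emph{deadline} edge from every accepting state to $u$ available exactly when $x\le T$, for $T\eqdef 2^{\card{Q}}$. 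This deadline edge is decreasing, and \PONE can reach $u$ only from an accepting state entered at some time $\le T$. By the exponential shortest-word bound (as in \cref{lem:PRG-SWP}), the absorbing automaton accepts $a^{T}$ iff it is non-empty iff $\auta$ is non-empty; and \PONE can force the play into $u$ iff she can force an accepting state by time $T$, i.e.\ iff $a^{T}$ is accepted. Thus \PONE wins the reachability game iff $L(\auta)\neq\emptyset$, giving \PSPACE-hardness for decreasing reachability and a fortiori for decreasing parity.

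\medskip
\noindent\textbf{Lower bound, increasing case.}
The difficulty is that an increasing constraint $x\ge n$ encodes a release time, not a deadline, and a release time can always be met by stalling, so the deadline trick does not transfer. I therefore reduce from the decreasing game just built by \emph{reversing time} over its bounded horizon $D\eqdef T+2$, beyond which no play matters ($u$ is absorbing and the deadline has passed). On the expansion up to time $D$ I substitute $i\mapsto D-i$: a decreasing edge $x\le n$ becomes available precisely when $x\ge D-n$, which is increasing, while permanently available edges stay permanent, so the new graph is increasing. Reversing the edge directions, exchanging the source and target $u$, and dualising vertex ownership turns the \PONE-reachability game into an equivalent game on this increasing graph, with the objective preserved via determinacy of the bounded game. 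Correctness then yields \PSPACE-hardness for increasing reachability, hence for increasing parity.

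\medskip
\noindent\textbf{Main obstacle.}
The routine parts are the upper bound and the decreasing reduction. The crux is the increasing case: making the time-reversal precise requires committing to a finite horizon $D$ so that $i\mapsto D-i$ is defined, and---more delicately---correctly dualising the alternation when edges are reversed, since reversing a reachability game swaps the two players' roles and turns reachability into a (co-)reachability condition. Getting this ownership/objective dualisation right, so that a winning \PONE strategy in the decreasing game maps to a winning strategy in the reversed increasing game, is where the real work lies.
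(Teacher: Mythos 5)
Your upper bound is fine and matches the paper's (ultimately static, hence \cref{thm:parity} plus \cref{rem:parity-ultimately-periodic} apply). Both lower bounds, however, have genuine gaps.

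For the decreasing case, the step ``make every accepting state absorbing \dots\ this only adds accepted lengths and preserves emptiness'' is false, and the error is fatal. With universal branching, the absorbing automaton can accept $a^m$ whenever every branch reaches an accepting state by time $m$, with different branches arriving at \emph{different} times; the original AFA may accept nothing (take a universal state branching to an accepting dead-end and to a path that only reaches an accepting state one step later). Worse, your whole game is equivalent to the question ``can \PONE force every play into an accepting state within $2^{\card{Q}}$ steps?'', and since attractor strategies reach their target within $\card{\states}$ steps whenever they reach it at all, the exponential deadline is vacuous: you have reduced an ordinary static reachability game, which is \P-complete, not a \PSPACE-hard problem. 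The source of \PSPACE-hardness in \cref{lem:reachability-LB} is \emph{punctuality} --- all branches must be at the target \emph{simultaneously at exactly} time $T$ --- and a deadline alone cannot enforce ``not too early.'' The paper's construction does this by handing the original target $v$ to \PTWO together with an escape edge $v\step{x}\bot$ available precisely when $x\le T-1$ (which is decreasing): \PTWO bails out to the sink unless the token arrives at $v$ at exactly time $T$, and the subsequent deadline on $w\step{x}\top$ rules out arriving late.

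For the increasing case, the time-reversal-plus-edge-reversal-plus-dualisation you sketch is not a valid game transformation: alternation does not reverse, strategies read histories forward, and there is no general correspondence between the winner of a reachability game and that of its edge-reversed dual (you flag this yourself as ``where the real work lies,'' but that work is the entire content of the step). It also inherits the broken decreasing construction. The paper avoids all of this with a small local change to the same gadget: swap the ownership of $v$ and $w$, give \PONE the edge $v\step{x}w$ only at \emph{release} times $x\ge T$ (so she cannot move on too early) and give \PTWO the escape $w\step{x}\bot$ at times $x\ge T+2$ (so he punishes arriving too late); all constraints are then increasing and punctuality at time $T$ is again forced.
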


\begin{figure}[t]
	\centering
	\resizebox{0.8\textwidth}{!}{
					\begin{tikzpicture}
				
				\node[ostate] (v) at (10, 0){$v$};
				\node[estate, color=red, fill=red!20] (w) at (12.75, 0){$w$};
				\node[estate, color=red, fill=red!20] (top) at (15, -1){$\top$};
				\node[estate, color=red, fill=red!20] (bot) at (15, 1){$\bot$};
				
				\path[every loop/.append style=-{>}]
				(7, 1.5) edge [-] node [above]{} (10.08, 1.5)
				(7, -1.5) edge [-] node [above]{} (10.08, -1.5)
				(10, 1.56) edge [-] node [above]{} (10,0.25)
				(10, -1.56) edge [-] node [above]{} (10,-0.25)
				(v) edge [color=red] node [above]{} (w)
				(v) edge [color=red, bend left] node [above, sloped]{$x\leq T-1$} (bot)
				(w) edge [color=red] node [above, color=red, sloped]{$x\leq T+1$} (top)
				(w) edge [color=red] node [above]{} (bot)
				(bot) edge [loop right, color=red] node [above]{} (bot)
				(top) edge [loop right, color=red] node [above]{} (top)
				;	
			\end{tikzpicture}
	}
	\caption{Reduction from a punctual reachability game to a
		reachability game on a temporal graph that is finite and decreasing, see \cref{thm:monotone-decreasing}.
		Components added are shown in red.
	}
	\label{FigureTwo}
\end{figure}
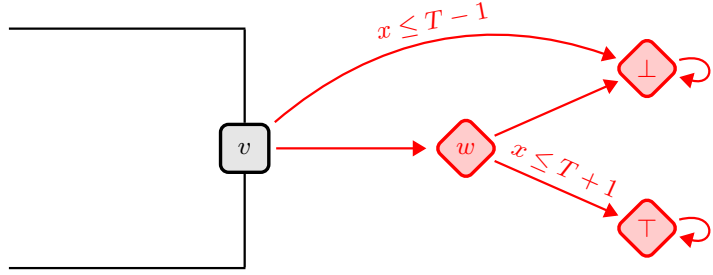

\begin{proof}
    The upper bound holds for parity games as the description of the temporal graph explicitly includes a maximal time $T$ from which the graph becomes static.
    One can therefore solve the Parity game for the static suffix graph (in \NP) and then apply the
    \PSPACE\ procedure (\cref{lem:TRG-PSPACE})
    to solve for temporal reachability towards the winning region at time $T$.
    Alternatively, the same upper bound also follows from 
    \cref{thm:parity,rem:parity-ultimately-periodic}.
    
    For the lower bound we reduce from punctual reachability games which are \PSPACE-hard by 
    \cref{lem:reachability-LB}.
    Consider a (static) graph $G$ and a target time $T\in \N$ given in binary. 
    Without loss of generality, 
    assume that the target vertex $v$ has no outgoing edges.
	We convert $G$ into a temporal graph $G'$ with $V'= V\cup \{w,\top,\bot\}$,
	$V'_1= (V_1\setminus\{v\} )\cup \{w\} $, $V'_2= V'\setminus V'_1$ and new target $\top$.
	The vertex $\bot$ is a sink state
        and the original target vertex $v$ is now controlled by \PTWO.
        Edge availabilities are 
        $v\step{x}\bot $ if $x\leq T-1$,
        $v\step{x}w$ if $x\leq T+1$,
        $w\step{x}\top $ if $x\leq T+1$, 
        and all other edges disappear after time $T+1$.
        The constructed temporal graph is finite and decreasing.
        See \cref{FigureTwo}.
	The construction ensures that the only way to reach $\top$ is to reach $v$ at time $T$, $w$ at time $T+1$ and take the edge from $w$ to $\top$ at time $T+1$. 
	\PONE wins in $G'$ if and only if she wins the punctual reachability game on $G$. 
	
	A similar reduction works in the case of increasing temporal graphs 
	by switching the ownership of vertices $v$ and $w$. 
	The vertex $v$, now controlled by \PONE has the edge $v\step{x}w$ at times 
	$x\ge T$ and the edge $v\step{}\bot$ at all times. 
	The vertex $w$ now controlled by \PTWO has the edge $w\step{}\top$ available at all times but the edge $w\step{x}\bot$ becomes available at time $x\ge T+2$. 
	\qed
\end{proof}

\paragraph*{Declining and improving temporal games: }
We now consider the restriction where all edges controlled by one player are increasing and those of the over player are decreasing.
Taking the perspective of the system \PONE,
we call a game on a temporal graph \emph{declining}
if all edges $u\step{}v$ with $u\in\states_1$ are decreasing
and 
all edges $u\step{}v$ with $u\in\states_2$ are increasing.
Note that \emph{declining} is a property of the game and not the graph as the definition requires a distinction based on ownership of vertices, which
is specified by the game and not the underlying graph.
From now on, we refer to such games as declining temporal reachability (or parity) games.
Notice
that \PONE has fewer, and \PTWO has more choices to move at later times.
Analogously, call the game \emph{improving}
if
the conditions are reversed, i.e.,
all edges $u\step{}v$ with $u\in\states_1$ are increasing
and 
all edges $u\step{}v$ with $u\in\states_2$ are decreasing.

We show that declining (and improving) temporal reachability games can be solved in polynomial time.

\begin{theorem}
	Solving declining (respectively improving) temporal reachability games  is in \P.
\end{theorem}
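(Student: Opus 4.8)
The plan is to exploit a \emph{monotonicity of the winning region in time} that declining games enjoy, and to show that this collapses the (exponentially long) timeline into polynomially many phases that each stabilise quickly. Throughout, fix a declining temporal reachability game with target $F\subseteq\states$, and for each time $t\in\N$ let $W_t\subseteq\states$ denote the set of vertices from which \PONE wins the reachability game started at time $t$. The key lemma I would prove first is that $W_t\supseteq W_{t+1}$ for all $t$, i.e.\ \PONE's winning region can only shrink as time advances. The argument is a \emph{shift argument}: given a winning \PONE strategy from $(v,t+1)$, I build one from $(v,t)$ by replaying every decision one step earlier. A \PONE-owned edge used at time $t+1+i$ in the original play is, by the decreasing property, still available at time $t+i$, so \PONE can mimic her move; dually, any \PTWO move available at the earlier time $t+i$ was, by the increasing property, also available at $t+1+i$, so it is already answered by the original strategy. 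Since reaching $F$ at \emph{some} time is preserved under this shift, the shifted strategy is winning. I expect this lemma to be the main obstacle, as it is the only place where both players' monotonicity assumptions must be combined correctly.

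Next I would decompose the timeline. Each edge is given by a single threshold of the form $x\le n$ or $x\ge n$, so the availability function changes only at the at most $\card{\edges}$ distinct thresholds. Sorting them yields at most $\card{\edges}+1$ maximal intervals $[\tau_j,\tau_{j+1})$ on each of which the edge set is a fixed static graph $E_j$, together with a final static suffix $[T,\infty)$ with edge set $E_\infty$ (all increasing edges present, all decreasing edges gone). On $[T,\infty)$ the game is an ordinary static reachability game, so $W_T$ is its winning region, computable in polynomial time. For earlier times the standard one-step unfolding gives the backward recurrence $W_t = F\cup\Pre[1]^{E_j}(W_{t+1})$ for $t\in[\tau_j,\tau_{j+1})$, where $\Pre[1]^{E_j}$ is the controllable predecessor under the fixed edge set $E_j$; including $F$ at every level records that reaching the target at any intermediate time already wins. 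I would process the intervals from the suffix backwards, using the known value $W_{\tau_{j+1}}$ as the boundary condition for interval $j$.

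Finally I would combine the two ingredients. Within a single interval the operator $f_j(Z)=F\cup\Pre[1]^{E_j}(Z)$ is fixed and monotone, and by the monotonicity lemma the backward chain $W_{\tau_{j+1}}\subseteq f_j(W_{\tau_{j+1}})\subseteq f_j^2(W_{\tau_{j+1}})\subseteq\cdots$ is increasing. An increasing chain of subsets of $\states$ reaches a fixpoint after at most $\card{\states}$ steps, and once $f_j$ repeats a set it stays there forever. Hence, regardless of how long (possibly exponentially long) an interval is, $W_{\tau_j}$ is obtained from $W_{\tau_{j+1}}$ by iterating $f_j$ only $\min(\tau_{j+1}-\tau_j,\card{\states})\le\card{\states}$ times: for a short interval one iterates exactly its length, and for a long one the fixpoint is reached and the remaining iterations are vacuous. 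Each iteration is a polynomial-time $\Pre[1]$ computation, there are polynomially many intervals, and so $W_0$ — and thus the set of vertices from which \PONE wins from time $0$ — is computed in polynomial time. The \emph{improving} case is symmetric: the shift argument then gives $W_t\subseteq W_{t+1}$, the per-interval backward chain is decreasing, and the same $\card{\states}$-step stabilisation applies.
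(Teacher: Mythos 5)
Your proposal is correct and follows essentially the same route as the paper: the identical shift argument yields the time-monotonicity $W_{t+1}\subseteq W_{t}$ of the winning regions, and the identical acceleration idea (between consecutive change points the edge set is static, so the backward $\Pre$ chain is monotone and stabilises within $\card{\states}$ iterations) collapses the exponentially long timeline into polynomially many $\Pre$ computations. The only cosmetic differences are that you phrase the acceleration as a per-interval fixpoint iteration rather than as the paper's explicit accelerated algorithm, and you treat the improving case by a directly symmetric (decreasing-chain) argument where the paper instead dualises to \PTWO's winning region and invokes determinacy.
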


\begin{proof}
	We first give the proof for declining games. 
	Consider the reachability game on the expansion with vertices $V\times \N$ such that the target set is $F\times \N$. 
	For $k\in\N$ let $W_k\subseteq V$ be the set of those vertices $u$
        such that \PONE has a winning strategy from $(u,k)$.
	We first show that

        \begin{equation}
            \label{eq:mono-sets}
        W_{i+1}\subseteq W_i
        \end{equation}

	For sake of contradiction, suppose there exists 
	$u\in W_{i+1}\setminus W_i$.
	Let $\sigma_{i+1}^1$ be a (positional) winning strategy from $(u,i+1)$ for \PONE in the expansion.
	Since $u \not \in W_i$, by positional determinacy of reachability games
        (\cref{thm:uprop}), \PTWO has a winning strategy $\sigma_{i}^2$ from $(u,i)$.
	Consider a strategy $\sigma_{i}^1$ for \PONE, such that for all $v\in V_1$, $\sigma_{i}^1(v,k)\eqdef \sigma_{i+1}^1(v,k+1)$, for all $k\ge i$. 
	Similarly, let $\sigma_{i+1}^2$ be the strategy for \PTWO, 
	such that for all $v\in V_2$,  
	$\sigma_{i+1}^2(v,k+1)=\sigma_{i}^2(v,k)$, for all $k\ge i$, 
	Note that this is well defined because by definition of declining games, i.e, $v\step{k+1}u$ implies $v\step{k}u$ for all $v\in V_1$, and $v\step{k}u$ implies $v\step{k+1}u$, for all $v\in V_2$.
	Starting from the vertex $(u,i+1)$, the pair of strategies $(\sigma_{i+1}^1,\sigma_{i+1}^2)$ defines a unique play $\pi_{i+1}$, which is winning for \PONE. Similarly, the pair of strategies 
	$(\sigma_{i}^1,\sigma_{i}^2)$ define a play $\pi_{i}$ which is winning for \PTWO starting from $(u,i)$. However, the two plays visit the same set of states, particularly, $(v,k)$ is visited in $\pi_i$ if and only if $(v,k+1)$ is visited in $\pi_{i+1}$. Therefore, either both are winning for \PONE or both are losing for \PTWO, which is a contradiction.
	\smallskip
	Let $N\subseteq \N$ be the set of times at which the graph changes, i.e.
        \[
        N=\{c~\mid~ \exists \Phi_{u,v}(x)=x \triangleleft c\text{, where } \triangleleft\in \{\le, \ge\}\}\}
        \]
    Let $m\eqdef\max(N)$ be the latest time any edge availability changes.
	We show that $W_m= W_k$ for all $k\ge m$. To see this, note that $W_m$ is equal to the winning region for \PONE in the (static) reachability game played on $G_m=(V, E_m)$, where $E_m=\{(u,v)~\mid~ u\step{m}v\}$. Consider a (positional) winning strategy $\sigma_m$ for \PONE in $G_m$ and define a positional strategy $\sigma(v,k)=\sigma_m(v)$, for $k\ge m$. Since the graph is static after time $m$, this is well defined.	
	Starting from a vertex $(u,k)$, a vertex $(v,k+k')$ is visited on a $\sigma$-consistent path if and only if there is a $\sigma_m$-consistent path $u\step[k']{}v$. Therefore, $\sigma$ is a winning strategy 
	from any vertex $(v,k)$ such that $k\ge m$ and $v\in W_m$.
	Moreover, the set $W_m$ can be computed in time $\mathcal{O}(|V|^2)$ by solving the reachability game on $G_m$ \cite[Theorem~12]{GAMES2023}.

	\begin{algorithm}[t]
		\caption{Algorithm for declining games with set of change times $N$ and $m=\max(N)$}\label{alg:ptime}
		
		\begin{algorithmic}
			\State $W\gets\text{Solve}(G_m)$   \Comment{Computes \PONE winning region  in $G_m$}
			\While{$N\neq \emptyset$}
			\State $n\gets max(N)$
			\If{ $(\Pre(W\times \{n\}) = W$}	
			\State $N \gets N\setminus n$ \Comment{Accelerate to next change time}
			\Else	
			\State $W\gets \Pre(W)$ 
			\State $N\gets N\cup \{n-1\}\setminus \{n\}$ 
			\EndIf
			\EndWhile
			
		\end{algorithmic}
		
	\end{algorithm}

	To solve reachability on declining temporal games,
	we can first compute the winning region $W_m$ in the stabilised game $G_m$. This means $W_m\times [m,\infty)$ is winning for \PONE.
	To win the declining temporal reachability game, \PONE can play the punctual reachability game with target set $W_m$ at target time $m$.
	The winning region for \PONE at time $0$ can therefore be computed as $\Pre^m(W_m\times \{m\})$ as outlined in the proof of \cref{lem:TRG-PSPACE}.
	Note that na\"ively this only gives a \PSPACE\ upper bound
        as in the worst case, we would compute $\Pre$ an exponential ($m$) times.
	
		To overcome this, note that in the expansion graph
        $\Pre^i(W_m \times \{m\})= W_{m-i} \times \{m-i\}$.
        According to
        \cref{eq:mono-sets},
        $W_{m-i}\subseteq W_{m-i'}$ for $i'>i$.
        Let $i,i'$ be such that $m-i$ and $m-i'$ are both consecutive change points, i.e, 
        $m-i, m-i'\in N$ and $\forall \ell\in N. \ell<m-i' \vee \ell>m-i$. 
        Since the edge availability of the graph does not change between time $m-i'$ and $m-i$, we have $W_{m-i-1}=W_{m-i}$ implies $W_{m-i'}=W_{m-i}$.
        Therefore, we can accelerate the $\Pre$ computation and directly move to the time step $m-i'$, i.e, the $i'$th iteration in the computation. 
        This case is illustrated at time $n'=m-i'$ in \cref{fig:palg}.
        
	With this change, our algorithm runs the $\Pre$ computation at most $|V|+|N|$, as each $\Pre$ computation either corresponds to a step a time in $N$ when the graph changes, 
	or a step in which the winning region grows such as at time $n$ in \cref{fig:palg}.
	Since each $\Pre$ computation can be done in polynomial time, 
	we get a PTIME algorithm in this case, shown in \cref{alg:ptime}.

	 The case for improving temporal reachability games can be solved similarly. Instead of computing the winning region for \PONE in $G_m$, we start with computing the winning region $W_m^2$ for \PTWO in $G_m$ and switch the roles of \PONE and \PTWO, i.e, \PTWO has the punctual reachability objective with target set $W_m^2$ and target time $m$, which can be solved as above. This gives us an algorithm to compute the winning region for \PTWO and by determinacy of reachability games on infinite graphs, we can compute the winning region for \PONE at time $0$ as well.
	 \qed
\end{proof}

\begin{remark}
	\cref{alg:ptime} also works for parity objectives by changing step $1$, where $\text{Solve}(G_m)$ would amount to solving the parity game on the static graph $G_m$. This can be done in quasi-polynomial time and therefore gives a quasi-polynomial time algorithm to solve declining  (improving) temporal parity games and in particular, gives membership in the complexity class $\NP \cap \coNP$. 
\end{remark}
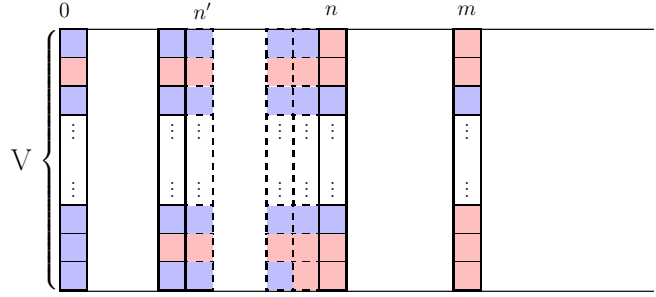
\begin{figure}[t]
	\setlength{\arrayrulewidth}{.1em}
	\begin{center}
		\resizebox{0.8\textwidth}{!}{
			\begin{tikzpicture}[-{Stealth[length=3mm, width=2mm]},node distance={30mm}, thick, main/.style = {draw, circle}]
				
				\node[main, minimum size = 3cm, draw=none](vertex) at (0.25,0){\fontsize{20}{20}\selectfont \states};
				\node[main, minimum size = 3cm, draw=none](roundminusone) at (9,0){\fontsize{25}{25}\selectfont };
				\node[main, draw=none](kzero) at (1.25,3.5){\Large{$0$}};
				\node[main, draw=none](kone) at (4.45,3.5){\Large{$n'$}};
				\node[main, draw=none](ktwo) at (7.45,3.5){\Large{$n$}};
				\node[main, draw=none](kthree) at (10.59,3.5){\Large{$m$}};
				
				\node[main,draw=none](table0) at (1.25,0){
					\Large
					$\left\{
					\begin{tabular}{|c|}
						\hline
						\cellcolor{blue!25}\rule{0.5cm}{0pt}\\
						\hline
						\cellcolor{red!25}\\
						\hline
						\cellcolor{blue!25}\\
						\hline
						$\vdots$\\
						\\
						$\vdots$\\
						\hline
						\cellcolor{blue!25}\\
						\hline
						\cellcolor{blue!25}\\
						\hline
						\cellcolor{blue!25}\\
						\hline
					\end{tabular}
					\right.$
				};
				\node[main,draw=none](table1) at (3.75,0){
					\Large
					\begin{tabular}{|c|}
						\hline
						\cellcolor{blue!25}\rule{0.5cm}{0pt}\\
						\hline
						\cellcolor{red!25}\\
						\hline
						\cellcolor{blue!25}\\
						\hline
						$\vdots$\\
						\\
						$\vdots$\\
						\hline
						\cellcolor{blue!25}\\
						\hline
						\cellcolor{red!25}\\
						\hline
						\cellcolor{blue!25}\\
						\hline
					\end{tabular}
				};
				\node[main,draw=none](table1) at (4.37,0){
					\Large
					\begin{tabular}{|c:}
						\hdashline
						\cellcolor{blue!25}\rule{0.5cm}{0pt}\\
						\hdashline
						\cellcolor{red!25}\\
						\hdashline
						\cellcolor{blue!25}\\
						\hdashline
						$\vdots$\\
						\\
						$\vdots$\\
						\hdashline
						\cellcolor{blue!25}\\
						\hdashline
						\cellcolor{red!25}\\
						\hdashline
						\cellcolor{blue!25}\\
						\hline
					\end{tabular}
				};
				\node[main,draw=none](table2) at (6.25,0){
					\Large
					\begin{tabular}{:c:}
						\hdashline
						\cellcolor{blue!25}\rule{0.5cm}{0pt}\\
						\hdashline
						\cellcolor{red!25}\\
						\hdashline
						\cellcolor{blue!25}\\
						\hdashline
						$\vdots$\\
						\\
						$\vdots$\\
						\hdashline
						\cellcolor{blue!25}\\
						\hdashline
						\cellcolor{red!25}\\
						\hdashline
						\cellcolor{blue!25}\\
						\hdashline
					\end{tabular}
				};
				\node[main,draw=none](table2) at (6.85,0){
					\Large
					\begin{tabular}{:c}
						\hdashline
						\cellcolor{blue!25}\rule{0.5cm}{0pt}\\
						\hdashline
						\cellcolor{red!25}\\
						\hdashline
						\cellcolor{blue!25}\\
						\hdashline
						$\vdots$\\
						\\
						$\vdots$\\
						\hdashline
						\cellcolor{blue!25}\\
						\hdashline
						\cellcolor{red!25}\\
						\hdashline
						\cellcolor{red!25}\\
						\hdashline
					\end{tabular}
				};
				\node[main,draw=none](table2) at (7.47,0){
					\Large
					\begin{tabular}{|c|}
						\hline
						\cellcolor{red!25}\rule{0.5cm}{0pt}\\
						\hline
						\cellcolor{red!25}\\
						\hline
						\cellcolor{blue!25}\\
						\hline
						$\vdots$\\
						\\
						$\vdots$\\
						\hline
						\cellcolor{blue!25}\\
						\hline
						\cellcolor{red!25}\\
						\hline
						\cellcolor{red!25}\\
						\hline
					\end{tabular}
				};
				\node[main,draw=none](table3) at (10.59,0){
					\Large
					\begin{tabular}{|c|}
						\hline
						\cellcolor{red!25}\rule{0.5cm}{0pt}\\
						\hline
						\cellcolor{red!25}\\
						\hline
						\cellcolor{blue!25}\\
						\hline
						$\vdots$\\
						\\
						$\vdots$\\
						\hline
						\cellcolor{red!25}\\
						\hline
						\cellcolor{red!25}\\
						\hline
						\cellcolor{red!25}\\
						\hline
					\end{tabular}
				};
				
				\path[every loop/.append style=-{Stealth[length=3mm, width=2mm]}]
				(1.25,3.06) edge [-] node[]{} (15,3.06)
				(1.25,-3.07) edge [-] node[]{} (15,-3.07)
				;
			\end{tikzpicture}
		}
	\end{center}
	\caption{Illustration of \cref{alg:ptime}. The blue vertices at time $i$  denote the winning region $W_i$ for \PONE. The times $n,n'\in N$ and $\Pre$ computation at change point $n$ increases the winning region but is stable at time $n'$.  }
	\label{fig:palg}
\end{figure}

Since the declining (improving) restriction on games on temporal graphs allow
for improved algorithms,
a natural question is to try to lift this approach to a larger class of games on temporal graphs. 
Note that the above restrictions are a special case of eventually periodic temporal graphs with a prefix of time $m$ followed by a periodic graph with period $1$. 
Now, we consider temporal graphs of period $\period>1$
such that the game arena is declining (improving) within each period.
Formally, a game on a temporal graph $G$ is \emph{periodically declining} (improving) if 
there exists a period $\period$ such that for all $k\in \N$, $k\in\edges(u,v)$ if and only if $k+\period\in \edges(u,v)$; and
the game on the finite temporal graph resulting from $G$ by making the graph constant from time $\period$ onwards, is declining (improving).
We prove that this case is \PSPACE-hard, even with reachability objectives.

\begin{theorem}
    \label{thm:monotone-declining}
	Solving  periodically declining (improving) temporal reachability games  is \PSPACE-complete.
\end{theorem}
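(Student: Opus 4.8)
The plan is to settle the upper bound for free and to spend the real effort on hardness. For membership in \PSPACE, observe that a periodically declining (improving) game is by definition a game on a periodic temporal graph: the witnessing period $\period$ makes the edge predicate $\period$-periodic, and the reachability objective is a special case of a parity objective (give the targets a single even colour via an always-available self-loop and every other vertex an odd colour). Membership therefore follows directly from \cref{thm:parity} (and \cref{rem:parity-ultimately-periodic} if one prefers to peel off a stabilised prefix). So only hardness requires work.

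For hardness I would reduce from punctual reachability games, which are \PSPACE-hard by \cref{lem:reachability-LB}. Given a static game $G=(\states,\edges,\init,F)$ with binary target time $T$, the first step is to keep every original edge available at all times: a constant edge is simultaneously increasing and decreasing, hence compatible with the declining restriction no matter which player owns it, so the underlying alternating game is preserved verbatim. The timing is then imposed by a small gadget built around a fresh absorbing target $\top$ and a sink $\bot$, with a checkpoint placed at the period boundary. The observation that makes the declining constraint satisfiable is that the non-monotone behaviour one naively wants --- an escape edge available at every instant \emph{except} the critical one --- becomes monotone once the critical instant is aligned with phase $0$ of the period: an edge available at all phases in $[1,\period-1]$ is exactly an increasing edge, hence admissible for \PTWO. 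Concretely, the decision vertex is owned by \PTWO, carries a constant edge to $\top$ and an increasing edge to $\bot$ available on $[1,\period-1]$, so that \PTWO is forced into $\top$ precisely when the token arrives at phase $0$ (a multiple of $\period$) and may escape to $\bot$ at every other phase. This encodes a \emph{modular} reachability condition: \PONE wins iff she can force the token onto the target at some time $\equiv 0 \bmod \period$.

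The hard part will be faithfulness under repetition. Because the temporal graph repeats with period $\period$, the naive construction only certifies ``reach the target at some multiple of $\period$'' rather than ``reach it at exactly $T$'', and an adaptive \PONE strategy could in principle aim at different multiples $\period,2\period,\dots$ on different branches --- behaviour that need not correspond to winning any single punctual game, so soundness is genuinely at risk. The resolution is to make the simulated play one-shot: the checkpoint is absorbing, so once phase $0$ is reached the token leaves the live region forever, and I would add a polynomial-size binary step-counter gadget, synchronised to the period and itself laid out to respect the declining constraint, that funnels the token into $\bot$ after exactly $T$ steps. Choosing $\period$ and the counter so that the only reachable multiple of $\period$ at which the target can still be live is $T$ itself makes the modular condition collapse to punctual reachability at $T$, yielding both soundness and completeness; \cref{lem:PRG-SWP} bounds the relevant times by $2^{\card{\states}}$ and so keeps the counter polynomial. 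The improving case then follows by the usual role-swap between \PONE and \PTWO (as in \cref{thm:monotone-decreasing}) together with determinacy of reachability games.
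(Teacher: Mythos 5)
Your upper bound is fine and is the same as the paper's (both just invoke \cref{thm:parity}). The lower bound, however, has a genuine gap. You correctly diagnose the soundness problem: if every original edge stays constantly available, the phase-$0$ checkpoint only certifies that \PONE can reach the target at \emph{some} time $\equiv 0 \bmod \period$, which is strictly weaker than punctual reachability (she might be able to hit the target at time $2\period$ but not at $\period$, and may even aim at different multiples on different branches). But your proposed repair --- a ``polynomial-size binary step-counter gadget, synchronised to the period and itself laid out to respect the declining constraint, that funnels the token into $\bot$ after exactly $T$ steps'' --- is asserted, not constructed, and it is exactly the hard part of the reduction. Such a counter cannot be a separate polynomial-size subgraph (the token occupies one vertex at a time, so the counter would have to run in product with the simulation), and if you instead intend to realise it via timed edge constraints you are abandoning your own premise that the original edges remain constant. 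As written, the reduction establishes hardness only of the modular variant.

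The paper closes precisely this hole without any counter, by folding the deadline into the monotone constraints themselves. It takes period $\period=T+1$, makes every original edge out of a \PONE vertex \emph{decreasing} with constraint $x<T$, gives every \PTWO vertex an \emph{increasing} escape edge to $\bot$ with constraint $x\ge T$, and routes the target $v$ to a checkpoint $w$ whose edge $w\step{x}\top$ is available only at phase $0$ (while $w\step{x}\bot$ is always available). At phase $T$ of the \emph{first} period the play therefore collapses: every \PONE vertex other than $v$ has only the edge to $\bot$ left, and \PTWO escapes to $\bot$ from any of his vertices, so the token survives only if it sits on $v$ at time exactly $T$, whence it reaches $w$ at phase $0$ and then $\top$. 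The simulation is thus forced to be one-shot and the modular condition degenerates to punctual reachability at exactly $T$. If you replace your free-floating counter with these per-edge constraints, the rest of your argument (the phase-$0$ checkpoint, the observation that constant edges are both increasing and decreasing, and the role-swap for the improving case) goes through.
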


\begin{proof}
	The upper bound follows from the general case of parity games on periodic temporal graphs in \cref{thm:parity}. 
	The lower bound is by reduction from punctual reachability games.
	See \cref{FigureTwoB}.
	Given a (static) graph $G$ with target state $v$ and target time $T$, we obtain a periodically declining game $G'$ with period $\period=T+1$, vertices $V\cup \{w,\bot,\top\}$, new target $\top$, such that $V'_1=V_1\cup \{w,\bot,\top\}$ and $V'_2=V_2$. 
	We assume without loss of generality that the original target $v$ is a \PONE vertex, i.e, $v\in V_1$.
	
	We describe the edge availability in $G'$ up to 
	the period $\period= T+1$. 
	For all edges $(s,t)$ of the original graph $G$, such that $s\in V_1$, the edge $s\step{x}t$ is available if and only if $x<T$. Moreover for 
	all $s\in V_1\setminus \{v\}$, there is a new edge $s\step{x}\bot$ available at all times $x\le T$. 	 	
	For all $s\in V_2$, there is an edge $s\step{x}t$ is available at all times (until end of period) and $s\step{x}\bot$ is available after time $x\ge T$.
	These edges ensure that if a play in the original punctual reachability game ends in a vertex of the game other than $v$ at time $T$, 
	then \PTWO can force the play to reach the sink state $\bot$ and win.
	
	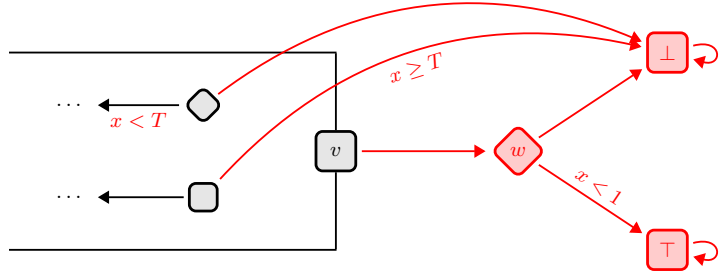
\begin{figure}[t]
		\centering
		\resizebox{0.8\textwidth}{!}{
						\begin{tikzpicture}[]
				
				\node[ostate] (v) at (10, 0){$v$};
				\node[estate, color=red, fill=red!20] (w) at (12.75, 0){$w$};
				\node[estate, scale=0.7] (player0) at (8, 0.7){};
				\node[ostate, scale=0.7] (player1) at (8,-0.7){};
				\node[draw=none] (player0dest) at (6, 0.7){$\dots$};
				\node[draw=none] (player1dest) at (6, -0.7){$\dots$};
				\node[ostate, color=red, fill=red!20] (top) at (15, -1.5){$\top$};
				\node[ostate, color=red, fill=red!20] (bot) at (15, 1.5){$\bot$};
				
				\path[every loop/.append style=-{>}]
				(5, 1.5) edge [-] node [above]{} (10.08, 1.5)
				(5, -1.5) edge [-] node [above]{} (10.08, -1.5)
				(10, 1.56) edge [-] node [above]{} (10, 0.25)
				(10, -1.56) edge [-] node [above]{} (10, -0.25)
				(player0) edge [] node [below, color=red]{$x<T$} (player0dest)
				(v) edge [color=red] node [above]{} (w)
				(player1) edge [] node [above]{} (player1dest)
				(w) edge [color=red] node [above, color=red, sloped]{$x<1$} (top)
				(w) edge [color=red] node [above]{} (bot)
				(bot) edge [loop right, color=red] node [above]{} (bot)
				(top) edge [loop right, color=red] node [above]{} (top)
				(player0) edge [bend left=30, color=red] node [above]{} (bot)
				(player1) edge [bend left, color=red] node [below, color=red, sloped]{$x\geq T$} (bot)
				;
			\end{tikzpicture}
		}
		\caption{Reduction from a punctual reachability game to a
			reachability game on a temporal graphs that is
			periodic and declining, see \cref{thm:monotone-declining}.
			Parts added are shown in red.
		}
		\label{FigureTwoB}
	\end{figure}
	
	From the original target $v$, there is an edge to the new state 
	$w$ at all times.
	From the state $w$, there are edges $w\step{}\bot$ at all times and $w\step{x}\top$ if $x= 0$. 
	If the state $w$ is reached at time $k$ such that $1<k<T+1$, then the play is forced to go to $\bot$. 
	The only winning strategy for \PONE is to reach $v$ at time $T$, $w$ at time $T+1$ at which the time is reset due to periodicity. The edge $w\step{T+1}\top$ is now available for \PONE and they can reach the new target $\top$.
	
	The lower bound for the case of periodically increasing temporal reachability games follows by the same construction and using the duality between improving and declining games on temporal graphs. 
	Given a punctual reachability game $G$ with vertices $V=V_1\uplus V_2$ with target set $F$, we obtain the dual punctual reachability game $\hat{G}$ 
	with same target time by first switch the ownership of vertices, i.e, $\hat{V}_i=V_{3-i}$, $i\in \{1,2\}$ and make the new target as $V\setminus F$. It is easy to see that \PONE wins $G$ if and only if \PTWO wins $\hat{G}$.
	
	Applying the same construction as shown in \cref{FigureTwoB} to $\hat{G}$, we obtain 
	a periodically declining temporal reachability game $\hat{G'}$, preserving the winner. 
	Now switching the ownership of vertices in $\hat{G'}$ yields a periodically improving temporal reachability game $G'$ which is winning for \PONE if and only if \PONE wins $G$. 
	\qed
\end{proof}

\section{Conclusion}
\label{sec:conclusion}
In this work we showed that parity games on ultimately periodic temporal graphs
are solvable in polynomial space.
The lower bound already holds for the very special case of punctual reachability games,
and the \PSPACE\ upper bound, which improves on the na\"ive exponential-space algorithm
on the unfolded graph, is achieved by proving the existence of small, \PSPACE-verifiable certificates.

We stress again that all constructions are effective no matter how the temporal graphs are defined,
as long as checking edge availability for binary encoded times is no obstacle.
In the paper we use edge constraints given in the existential fragment of Presburger arithmetic but alternate representations,
for example using compressed binary strings of length $\hor(G)$ given as Straight-Line Programs~\cite[Section~3]{BS84}
would equally work. Checking existence of edge at time $i$ would correspond to querying whether the $i^{th}$ bit is $1$ or not which is \P-complete \cite[Theorem~1]{LL06}. 

The games considered here are somewhat orthogonal to parity games played on the configuration graphs of timed automata,
where time is continuous, and constraints are \emph{quantifier-free} formulae involving possibly more than one variable (clocks).
Solving parity games on timed automata with two clocks 
is complete for \EXP\ but is in \P\ if there is at most one one clock \cite{AT09}~\cite[Contribution~3(d)]{HIM2013}.
Games on temporal graphs with quantifier-free constraints corresponds to a subclass of timed automata 
games with two-clocks, with intermediate complexity of \PSPACE.
This is because translating a temporal graph game to a timed automata game requires two clocks: one to hold the global time used to
check the edge predicate and one to ensure that time progresses one unit per step.

An interesting continuation of the work presented here would be to consider mean-payoff games
\cite{EM79}  
played on temporal graphs, possibly with dynamic step-rewards depending on the time.
If rewards are constant but the edge availability is dynamic, then our arguments for improved algorithms on declining/improving graphs would easily transfer.
However, the \PSPACE\ upper bound using summaries seems trickier, particularly checking realisability of suitable certificates.

\subsubsection{Acknowledgements} 
This work was supported by the 
Engineering and Physical Sciences Research Council (EPSRC), grant
\verb|EP/V025848/1|.
We thank Viktor Zamaraev and Sven Schewe
for fruitful discussions and constructive feedback.
%
%
%
\bibliographystyle{splncs04}
\bibliography{bib/journals,bib/conferences,bib/references}
\end{document}